\theoremstyle{theorem}
\newtheorem{thm}{Theorem}[section]
\newtheorem{lem}[thm]{Lemma}
\newtheorem{prop}[thm]{Proposition}
\newtheorem{fact}[thm]{Fact}
\theoremstyle{remark}
\newtheorem{rem}[thm]{Remark}
\Crefname{lem}{Lemma}{Lemmas}
\Crefname{thm}{Theorem}{Theorems}
\Crefname{prop}{Proposition}{Propositions}
\Crefname{fact}{Fact}{Facts}
\Crefname{cor}{Corollary}{Corollaries}
\Crefname{rem}{Remark}{Remarks}
\newlist{axioms}{enumerate}{1}
\setlist[axioms]{label=(\roman*), widest=(iii), leftmargin=!, itemindent=!, labelindent=0pt}
\crefname{axiomsi}{axiom}{axioms}
\newlist{results}{enumerate}{1}
\setlist[results]{label=(\roman*), widest=(iii), leftmargin=!}
\crefname{resultsi}{result}{results}
\newcommand{\N}{\mathbb{N}}
\newcommand{\Ncal}{\mathcal{N}}
\newcommand{\Z}{\mathbb{Z}}
\newcommand{\Nom}{\overline{\mathbb{N}}}
\newcommand{\pP}[1][]{\mathfrak{p}_{\mathsf{#1}}}
\newcommand{\pPs}{\mathfrak{p}}
\newcommand{\cC}{\mathcal{C}}
\newcommand{\cA}{\mathcal{A}}
\newcommand{\cB}{\mathcal{B}}
\newcommand{\cS}{\mathcal{S}}
\newcommand{\dcl}[1]{#1\mathord{\downarrow}}
\newcommand{\Pre}{\textsc{Pre}}
\newcommand{\Post}{\textsc{Post}}
\newcommand{\trans}[1]{\xrightarrow{#1}}
\newcommand{\fin}{\textup{fin}}
\newcommand{\init}{\textup{init}}
\newcommand{\eff}{\Delta}
\newcommand{\eps}{\varepsilon}
\newcommand{\subword}{\preceq}
\newcommand{\NL}{\mathsf{NL}}
\newcommand{\PTIME}{\mathsf{PTIME}}
\newcommand{\EXPSPACE}{\mathsf{EXPSPACE}}
\title{Unboundedness problems for languages of vector addition systems}
\DeclareDocumentCommand{\factors}{O{} m}{\ifthenelse{\isempty{#1}}{F(#2)}{F_{#1}(#2)}}
\author[1]{Wojciech Czerwi\'{n}ski}
\author[1]{Piotr Hofman}
\author[2]{Georg Zetzsche\footnote{Supported by a fellowship of Fondation Sciences Math\'{e}matiques de Paris.}}
\affil[1]{University of Warsaw \\
\texttt{\{wczerwin,ph209519\}@mimuw.edu.pl}}
\affil[2]{IRIF (Universit\'{e} Paris-Diderot \& CNRS)\\
  \texttt{zetzsche@irif.fr}}
\authorrunning{W. Czerwi\'{n}ski and P. Hofman and G. Zetzsche} 
\begin{document}
\maketitle
\begin{abstract}
  A vector addition system (VAS) with an initial and a final marking
  and transition labels induces a language. In part because the
  reachability problem in VAS remains far from being well-understood,
  it is difficult to devise decision procedures for such languages.
  This is especially true for checking properties that state
  the existence of infinitely many words of a particular
  shape. Informally, we call these \emph{unboundedness properties}.
  
  We present a simple set of axioms for predicates that can express
  unboundedness properties.  Our main result is that such a
  predicate is decidable for VAS languages as soon as it is decidable
  for regular languages. Among other results, this allows us to show
  decidability of (i)~separability by bounded regular languages,
  (ii)~unboundedness of occurring factors from a language $K$ with
  mild conditions on $K$, and (iii)~universality of the set of factors.
\end{abstract}


\section{Introduction}
Vector addition systems (VAS) and, essentially equivalent, Petri nets
are among the most widely used models of concurrent systems. Although
they are used extensively in practice, there are still fundamental
questions that are far from being well understood. 

This is reflected in what we know about decidability questions
regarding the most expressive class of languages associated to VAS:
The languages of (arbitrarily) labeled VAS with a given initial and
final configuration, which we just call \emph{VAS languages}.  In the
1970s, this class has been characterized in terms of closure
properties and Dyck languages by Greibach~\cite{Greibach1978} and
Jantzen~\cite{Jantzen1979}. Almost all decidability results about
these languages use a combination of these closure properties and the
decidability of the reachability problem for
VAS~\cite{DBLP:conf/stoc/Mayr81} (or for Reinhardt's
extension~\cite{Reinhardt2008}, such as
in~\cite{AtigGanty2011,Zetzsche2018a}). Of course, this method is
confined to procedures that somehow reduce to the existence of one or
finitely many runs of vector addition systems.

There are two notable exceptions (and, to the authors' knowledge,
these are the only exceptions) to this and they both rely on an
inspection of decision procedures for VAS.  The first is Hauschildt
and Jantzen's result~\cite{HauschildtJantzen1994} from 1994 that
finiteness of VAS languages is decidable, which employs Hauschildt's
algorithm to decide semilinearity of reachability
sets~\cite{hauschildt1990semilinearity}. The second is the much more
recent result of Habermehl, Meyer, and Wimmel from
2010~\cite{HabermehlMeyerWimmel2010}, showing that downward closures
are computable for VAS languages, which significantly generalizes
decidability of finiteness. Their proof involves a careful inspection
of marked graph-transition sequences (MGTS) of Lambert's algorithm for
the reachability proof.  This sparsity of decidability results is due
to the fact that the algorithms for the reachability problem are still
quite unwieldy and have been digested by few members of the research
community.

In particular, it currently seems difficult to decide whether there
exist infinitely many words of some shape in a given language---unless
the problem reduces to computing downward closures. Informally, we
call problems of this type \emph{unboundedness problems}. Such
problems are important for two reasons. The first concerns
\emph{separability problems}, which have attracted attention in recent
years~\cite{DBLP:journals/fuin/Bojanczyk17,DBLP:conf/stacs/ClementeCLP17,
DBLP:conf/icalp/Goubault-Larrecq16,DBLP:conf/csr/PlaceZ17,DBLP:conf/lics/PlaceZ17}.
Here, instead of deciding whether two languages are
disjoint, we are looking for a (typically finite-state) certificate
for disjointness, namely a set that includes one language and is
disjoint from the other.  For general topological reasons,
inseparability is usually witnessed by a common pattern, whose
presence in a language is an unboundedness property.  The second
reason is that unboundedness problems tend to be \emph{decidable where
  exact queries are not}. This phenomenon also occurs in the theory of
regular cost functions~\cite{Colcombet2013}.  Moreover, as it turns
out in this work, this is true for VAS languages as well.

\subparagraph*{Contribution} 
We present a simple notion of an \emph{unboundedness predicate} on
languages and show that such predicates are decidable for VAS
languages as soon as they are decidable for regular languages.  On the
one hand, this provides an easy and general way to obtain new
decidability results for VAS languages without the need to understand
the details of the KLMST decomposition. On the other hand, we apply
this framework to prove:
\begin{results}
\item\label{list:boundedness} Boundedness in the sense of Ginsburg and
  Spanier~\cite{ginsburg1964bounded} is decidable for VAS languages. A
  language $L\subseteq\Sigma^*$ is \emph{bounded} if there are
  $w_1,\ldots,w_n\in\Sigma^*$ with $L\subseteq w_1^*\cdots
  w_n^*$. Moreover, it is decidable whether two given VAS languages
  are separable by a bounded regular language.
\item\label{list:downward} Computability of downward closures
  can be recovered as well.
\item\label{list:factors} Suppose that $K\subseteq\Sigma^*$ is chosen
  so that it is decidable whether $K$ intersects a given regular
  language. Then, it is decidable for a given VAS language $L$ whether
  $L$ contains words with arbitrarily many factors from $K$.
  Moreover, in case the number of factor occurrences in $L$ is
  bounded, we can even compute an upper bound.
\item\label{list:universality} Under the same assumptions as above on
  $K\subseteq\Sigma^*$, one can decide if every word from $K^*$
  appears as a factor of a given VAS language $L\subseteq\Sigma^*$. In
  particular, it is decidable whether $L$ contains every word from
  $\Sigma^*$ as a factor.
\end{results}
It should be stressed that results
\labelcref{list:factors,list:universality} came deeply unexpected to
the authors. First, this is because the assumptions are already
satisfied when $K$ is induced by a system model as powerful as
well-structured transition systems or higher-order recursion schemes.
In these cases, it is in general \emph{undecidable} whether a given
VAS language contains a factor from $K$ \emph{at least once}, because
intersection emptiness easily reduces to this problem (see the remarks
after~\cref{non-overlapping-factors:decidable}). We therefore believe
that these results might lead to new approaches to verifying systems
with concurrency and (higher-order) recursion, where the latter
undecidability (or the unknown status in the case of simple
recursion~\cite{LerouxSutreTotzke2015}) is usually a barrier for
decision procedures.

The second reason for our surprise about
\labelcref{list:factors,list:universality} is that these problems are
undecidable as soon as $L$ is just slightly beyond the realm of VAS:
Already for one-counter languages $L$,
both \labelcref{list:factors,list:universality} become undecidable. Thus, compared to other infinite-state systems, VAS languages turn
out to be extraordinarily amenable to unboundedness problems.

\subparagraph*{Related work} Other authors have investigated general
notions of unboundedness properties for
VAS~\cite{DBLP:journals/ijfcs/AtigH11,BlockeletSchmitz2011,demri-infinity2010,yen1992unified},
usually with the goal of obtaining $\EXPSPACE$ upper bounds. However,
those properties a priori concern the state space itself. While they
can sometimes be used to reason about
languages~\cite{BlockeletSchmitz2011,demri-infinity2010}, this has
been confined to coverability languages, which are significantly less
expressive than the reachability languages studied here. Specifically,
every problem we consider here is hard for the reachability problem
(see \cref{complexity}).

An early attempt was Yen's work~\cite{yen1992unified}, which claimed
an $\EXPSPACE$ upper bound for a powerful logic concerning paths in
VAS.  Unfortunately, a serious flaw in the latter was discovered by
Atig and Habermehl~\cite{DBLP:journals/ijfcs/AtigH11}, who presented a
corrected proof for a restricted version of Yen's
logic. Demri~\cite{demri-infinity2010} then introduced a notion of
\emph{generalized unboundedness properties}, which covers more
properties from Yen's logic and proved an $\EXPSPACE$ procedure to
check them. Examples include reversal-boundedness, place boundedness,
and regularity of firing sequences of unlabeled VAS. Finally, Blockelet and
Schmitz~\cite{BlockeletSchmitz2011} introduce an extension of
computation tree logic (CTL) that can express ``coverability-like
properties'' of VAS. The authors prove an
$\EXPSPACE$ upper bound for model checking this logic on VAS.

\subparagraph{Organization} After \cref{sec:prelim} contains
preliminaries, \cref{sec:results} defines our notion of
unboundedness predicates and presents our main result.  In
\cref{sec:applications}, we apply the results to obtain the
consequences mentioned above. \Cref{sec:proof} is devoted to the
main result's proof.



\section{Preliminaries}\label{sec:prelim}
Let $\Sigma$ be a finite alphabet. For $w\in\Sigma^*$, we denote its length by $|w|$.
The $i$-th letter of $w$, for $i \in [1,|w|]$ is denoted $w[i]$.
Moreover, we write $\Sigma_\eps = \Sigma \cup \{\eps\}$.
A \emph{($d$-dimensional) vector addition system} (VAS) $V$ consists of finite set of \emph{transitions} $T \subseteq \Z^d$, 
\emph{source} and \emph{target} vectors $s, t \in \N^d$ and a \emph{labeling} $h\colon T \to \Sigma_\eps$, whose extension to a morphism $T^*\to\Sigma^*$ is also denoted $h$.
Vectors $v\in\N^d$ are also called \emph{configurations}. A transition $t \in T$ can be \emph{fired} in a configuration
$v \in \N^d$ if $v+t \in \N^d$. Then, the result of firing $t$ is the configuration $v+t$ and we write $v \trans{h(t)} v'$ for $v' = v+t$.  
For $w\in\Sigma^*$, we write $v \trans{w} v'$ if there exist $v_1, \ldots, v_k\in\N^d$
such that
$
v = v_0 \trans{x_1} v_1 \trans{x_2} \ldots \trans{x_k} v_k \trans{x_{k+1}} v_{k+1} = v',
$
where $w = x_1 \cdots x_{k+1}$ for some $x_1,\ldots,x_{k+1}\in\Sigma_\eps$. 
The \emph{language} of $V$, denoted $L(V)$, is the set of all labels of runs from source to target, i.e. $L(V) = \{w\in\Sigma^* \mid s \trans{w} t\}$.
The languages of the form $L(V)$ for VAS $V$ are called \emph{VAS languages}.
A word $u = a_1 \cdots a_n$ with $a_i \in \Sigma$ is a \emph{subword}
of a word $v \in \Sigma^*$ if $v \in \Sigma^* a_1 \Sigma^* \cdots
\Sigma^* a_n \Sigma^*$, which is denoted $u \subword v$.  For a
language $L \subseteq \Sigma^*$ its \emph{downward closure} is the
language $\dcl{L} = \{u\in\Sigma^* \mid \exists v \in L \colon u
\subword v\}$. It is known that $\dcl{L}$ is regular for every
$L\subseteq\Sigma^*$~\cite{Higman1952,Haines1969}.  A \emph{language
  class} is a collection of languages, together with some way of
finitely describing these languages (such as by grammars, automata,
etc.).  If $\cC$ is a language class so that given a description of a
language $L$ from $\cC$, we can compute an automaton for $\dcl{L}$, we
say that \emph{downward closures are computable} for $\cC$.

A \emph{full trio} is a language class that is effectively closed
under rational transductions~\cite{Berstel1979}, which are relations
defined by nondeterministic two-tape automata.  Examples of full trios
are abundant among infinite-state models: If a nondeterministic
machine model involves a finite-state control, the resulting language
class is a full trio. Equivalently, a full trio is a class that is
effectively closed under morphisms, inverse morphisms, and regular
intersection~\cite{Berstel1979}. Examples include \emph{VAS
  langauges}~\cite{Jantzen1979}, \emph{coverability languages of
  WSTS}~\cite{GeeraertsRaskinVanBegin2007}, \emph{one-counter languages}
(which are accepted by one-counter automata \emph{with zero
tests})~\cite{HopcroftUllman1979}, and languages of \emph{higher-order
pushdown automata}~\cite{Maslov1976} and \emph{higher-order recursion
schemes}~\cite{DBLP:conf/lics/HagueMOS08}. The context-sensitive do not
constitute a full trio, as they are not closed unter erasing morphisms.



\section{Main result}\label{sec:results}

Here, we introduce our notion of
unboundedness predicates and present our main result.

For didactic purposes, we begin our exposition of unboundedness
predicates with a simplified (but already useful) version.  An
important aspect of the definition is that technically, an
unboundedness predicates is not a property of the language
$L\subseteq\Sigma^*$ we want to analyze, but of the set of its
factors. In other words, we have a unary predicate $\pP$ on languages
and we want to decide whether $\pP(\factors{L})$, where
$\factors{L}=\{w\in\Sigma^* \mid L\cap \Sigma^*w\Sigma^*\ne\emptyset
\}$ is the set of factors of $L$. 
For the definition, it is helpful to keep in mind the simplest example
of an unboundedness predicate, the \emph{infinity predicate} $\pP[inf]$,
where $\pP[inf](K)$ if and only if $K$ is infinite. Then, $\pP[inf](\factors{L})$ if
and only if $L$ is infinite.
A unary predicate $\pP$ on languages over
$\Sigma^*$ is called \emph{1-dimensional unboundedness predicate} if
for every $K,L\subseteq\Sigma^*$, we have:
\begin{axioms}[label=({\roman*}$^*$), widest=(iii$^*$)]
\item\label{axiom:1dim:upward} if $\pP(K)$ and $K\subseteq L$, then $\pP(L)$.
\item\label{axiom:1dim:union} if $\pP(K\cup L)$, then either $\pP(L)$ or $\pP(K)$.
\item\label{axiom:1dim:concatenation} if $\pP(\factors{KL})$, then either $\pP(\factors{K})$ or $\pP(\factors{L})$.
\end{axioms}
Part of our result will be that for such predicates, if we can decide
whether $\pP(F(R))$ for regular languages $R$, we can decide whether
$\pP(F(L))$ for VAS languages $L$. Before we come to that, we want to
generalize a bit.
There are predicates we want to decide that fail to satisfy
\cref{axiom:1dim:concatenation}, such as the one stating
$a^*b^*\subseteq\dcl{L}$ for $L\subseteq\Sigma^*$: It is satisfied for
$a^*b^*$, but neither for $a^*$ nor for $b^*$. (Deciding such
predicates is useful for computing downward
closures~\cite{DBLP:conf/icalp/Zetzsche15} and separability by
piecewise testable
languages~\cite{DBLP:journals/dmtcs/CzerwinskiMRZZ17}) To capture such
predicates, which intuitively ask for several quantities being
unbounded simultaneously, we present a more general set of axioms.  Here, the idea is to
formulate predicates over simultaneously occurring factors.  For a
language $L\subseteq\Sigma^*$ and $n\in\N$, let
\[ \factors[n]{L} = \{(w_1,\ldots,w_n)\in(\Sigma^*)^n \mid \Sigma^*w_1\Sigma^*\cdots w_n\Sigma^*\cap L\ne\emptyset \}. \]
We will speak of \emph{$n$-dimensional} predicates, i.e., predicates
$\pP$ on subsets of $(\Sigma^*)^n$, and we want to decide whether
$\pP(\factors[n]{L})$ for a given language $L$.  The following are
axioms referring to all subsets $S,T\subseteq(\Sigma^*)^n$, languages
$L_i\subseteq\Sigma^*$, and all $k\in\N$.  We call $\pP$ an \emph{($n$-dimensional)
  unboundedness predicate} if
\begin{axioms}
\item\label{axiom:upward} if $\pP(S)$ and $S\subseteq T$, then $\pP(T)$.
\item\label{axiom:union} if $\pP(S\cup T)$, then $\pP(S)$ or $\pP(T)$.
\item\label{axiom:concatenation} if $\pP(\factors[n]{L_1\cdots L_k})$, then $n=n_1+\cdots +n_k$ such that
  $\pP(\factors[n_1]{L_1}\times\cdots\times\factors[n_k]{L_k})$.
\end{axioms}
Intuitively, the last axiom says that if a concatenation satisfies the
predicate, then this is already witnessed by factors in at most $n$
participants of the concatenation.  Note that for $n=1$, the axioms
coincide with the simplified
\cref{axiom:1dim:upward,axiom:1dim:union,axiom:1dim:concatenation}
above.  An $n$-dimensional unboundedness predicate $\pP$ is
\emph{decidable for a language class $\cC$} if, given a language $L$
from $\cC$, it is decidable whether $\pP(\factors[n]{L})$. The
following is our main result.

\begin{thm}\label{thm:approximation-unboundedness}
  Given a VAS language $L\subseteq\Sigma^*$, one can compute a regular
   $R\subseteq\Sigma^*$ such that $L\subseteq R$ and for every $n$-dim. unboundedness
  predicate $\pP$, we have $\pP(\factors[n]{L})$ iff
  $\pP(\factors[n]{R})$.
\end{thm}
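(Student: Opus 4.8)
The forward implication is immediate and independent of the choice of $R$: since $L\subseteq R$ we have $\factors[n]{L}\subseteq\factors[n]{R}$, so $\pP(\factors[n]{L})$ implies $\pP(\factors[n]{R})$ by \cref{axiom:upward}. All the work lies in constructing $R$ so that the converse holds. Two easy preliminary facts will be used throughout. First, $\factors[n]{\cdot}$ commutes with finite unions, which is immediate from the definition. Second, \cref{axiom:concatenation} has a free converse: if $(w_1,\dots,w_{n_1})\in\factors[n_1]{L_1},\dots,(v_1,\dots,v_{n_k})\in\factors[n_k]{L_k}$, then picking witnessing words $x_i\in L_i$ and concatenating them shows $(w_1,\dots,v_{n_k})\in\factors[n]{L_1\cdots L_k}$; hence $\factors[n_1]{L_1}\times\cdots\times\factors[n_k]{L_k}\subseteq\factors[n]{L_1\cdots L_k}$, and by \cref{axiom:upward}, $\pP(\factors[n_1]{L_1}\times\cdots\times\factors[n_k]{L_k})$ implies $\pP(\factors[n]{L_1\cdots L_k})$.

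The plan is to feed $L$ through the decomposition underlying the VAS reachability algorithm (the KLMST decomposition) and obtain a finite, effectively computable presentation $L=\bigcup_\xi L_\xi$, where each $L_\xi$ is a concatenation $L_\xi=K_{\xi,1}\cdots K_{\xi,m_\xi}$ of \emph{elementary} pieces: each $K_{\xi,j}$ is either a single word (the label of a bridging transition of a marked graph-transition sequence) or the labelled run-language of a node of a \emph{perfect} marked graph-transition sequence, i.e.\ of a VAS component all of whose relevant counters can be pumped to arbitrarily large values. The crucial combinatorial ingredient will then be the following lemma: for every elementary piece $K$ one can compute a regular language $\rho(K)$ with $K\subseteq\rho(K)$ and $\factors[p]{\rho(K)}=\factors[p]{K}$ for \emph{every} $p\in\N$. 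For a single word this is trivial. For a perfect node, the idea is that pumpability renders the nonnegativity constraints on the counters irrelevant to which tuples of factors can occur: any finite family of factors realized by a genuine run can also be realized by first pumping every counter high enough, then performing the cycles and transitions that produce those factors without ever consulting the constraints, then pumping the counters back down; conversely, erasing the constraints only adds words, and these add no new factors. Concretely one takes $\rho(K)$ to be (an effective presentation of) the label language of the node's finite graph with all counter information deleted, and proves the claimed equality in all dimensions by such a pumping argument. Carrying this lemma out at the level of the decomposition's node languages is the real work, and I expect it to be the main obstacle; the rest is assembly via the axioms.

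Granting the lemma, put $R:=\bigcup_\xi\rho(K_{\xi,1})\cdots\rho(K_{\xi,m_\xi})$, a finite union of concatenations of regular languages and hence regular and computable. Then $L_\xi\subseteq\rho(K_{\xi,1})\cdots\rho(K_{\xi,m_\xi})\subseteq R$, so $L\subseteq R$. For the nontrivial implication, suppose $\pP(\factors[n]{R})$. Since $\factors[n]{\cdot}$ commutes with union, \cref{axiom:union} yields some $\xi$ with $\pP(\factors[n]{\rho(K_{\xi,1})\cdots\rho(K_{\xi,m_\xi})})$. By \cref{axiom:concatenation} there is a split $n=n_1+\cdots+n_{m_\xi}$ with $\pP(\factors[n_1]{\rho(K_{\xi,1})}\times\cdots\times\factors[n_{m_\xi}]{\rho(K_{\xi,m_\xi})})$. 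By the lemma, $\factors[n_j]{\rho(K_{\xi,j})}=\factors[n_j]{K_{\xi,j}}$ for each $j$, so the same predicate holds of $\factors[n_1]{K_{\xi,1}}\times\cdots\times\factors[n_{m_\xi}]{K_{\xi,m_\xi}}$; by the free converse of \cref{axiom:concatenation} noted above this gives $\pP(\factors[n]{K_{\xi,1}\cdots K_{\xi,m_\xi}})=\pP(\factors[n]{L_\xi})$, and then $\pP(\factors[n]{L})$ by \cref{axiom:upward} since $L_\xi\subseteq L$. The construction is uniform in $n$: a single $R$ works for all dimensions at once, precisely because the lemma delivers equality of factor sets in every dimension.
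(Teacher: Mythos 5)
Your reduction of the theorem to the decomposition is sound in outline (the union/concatenation/upward-closure bookkeeping and the ``free converse'' of \cref{axiom:concatenation} are all correct), but the structural claim it rests on is false, and as a consequence your key lemma is not the right statement. The KLMST decomposition does \emph{not} present $L$ as a finite union of exact concatenations $L_\xi=K_{\xi,1}\cdots K_{\xi,m_\xi}$ of per-node languages. What it gives is $L=\bigcup_\xi L(\Ncal_\xi)$, where each $L(\Ncal_\xi)$ is defined by a \emph{global} run condition (a single sequence of markings threading all components, with the constraints $\mu_i\leq_\omega m_i^{\init}$, $\mu_i'\leq_\omega m_i^{\fin}$), and only the inclusion $L(\Ncal_\xi)\subseteq L(C_0)\{t_1\}L(C_1)\cdots\{t_n\}L(C_n)$ holds (\cref{eq:overapproximation}); the reverse inclusion fails because fragments realizable in \emph{some} run of each component need not be jointly realizable in one run. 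Your final step ``$\pP(\factors[n]{K_{\xi,1}\cdots K_{\xi,m_\xi}})=\pP(\factors[n]{L_\xi})$, hence $\pP(\factors[n]{L})$ by \cref{axiom:upward} since $L_\xi\subseteq L$'' needs $K_{\xi,1}\cdots K_{\xi,m_\xi}\subseteq L$, which is exactly what is not available.

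Relatedly, the per-component lemma $\factors[p]{\rho(K)}=\factors[p]{K}$ cannot carry the argument even if the pieces were well defined: after applying it you obtain witnesses $x_j\in K_{\xi,j}$, each coming from \emph{some} accepting run, and concatenating them gives a word in the over-approximation, not in $L$. What is actually needed is a cross-component statement: every tuple $(v_0,t_1,v_1,\ldots,t_n,v_n)$ with $v_i\in L(C_i)$ occurs as a tuple of ordered factors of a \emph{single} word of $L(\Ncal)$, i.e.\ $L(C_0)\times\{t_1\}\times\cdots\times\{t_n\}\times L(C_n)\subseteq\factors[2n+1]{L(\Ncal)}$ (this is \cref{eq:inclusion-mgts}, and the form in which the paper states its intermediate result, \cref{thm:approximation}: $R_{i,1}\times\cdots\times R_{i,k}\subseteq\factors[k]{L}$). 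Your pumping intuition is the right one, but it must be executed globally: one first extends each $v_i$ on the left to a covering sequence $x_iv_i$ of its component (\cref{lem:covering-sequences}) and then invokes Lambert's iteration lemma (\cref{lem:iteration}) to glue all components into one run $x_0v_0y_0\,t_1\,x_1v_1y_1\cdots\in L(\Ncal)$. With the lemma restated in this joint form, your assembly goes through essentially as in the paper; as written, the argument has a gap precisely at the point you yourself flagged as the main obstacle.
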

Note that this implies that decidability of $\pP$ for regular
languages implies decidability of $\pP$ for VAS languages for any
$n$-dim. unboundedness predicate $\pP$.  In addition, when our
unboundedness predicate expresses that a certain quantity is
unbounded, then in the bounded case,
\cref{thm:approximation-unboundedness} sometimes allows us to compute
an upper bound (see, e.g. \cref{non-overlapping-factors:decidable}).

\begin{rem}\label{complexity}
  Let us comment on the complexity of deciding whether
  $\pP(\factors[n]{L})$ for a VAS language $L$. Call $\pP$
  \emph{non-trivial} if there is at least one $K\subseteq\Sigma^*$
  that satisfies $\pP$ and least one $K'\subseteq\Sigma^*$ for which
  $\pP$ is not satisfied. Then, deciding whether $\pP(\factors[n]{L})$
  is at least as hard as the reachability problem. Indeed, in this
  case \cref{axiom:upward} implies that
  $\factors[n]{\Sigma^*}=\Sigma^*$ satisfies $\pP$, but
  $\factors[n]{\emptyset}=\emptyset$ does not. Given a VAS $V$ and two
  vectors $\mu_1$ and $\mu_2$, it is easy to construct a VAS $V'$ so
  that $L(V')=\Sigma^*$ if $V$ can reach $\mu_2$ from $\mu_1$ and
  $L(V')=\emptyset$ otherwise.
\end{rem}



\section{Applications}\label{sec:applications}


\newcommand{\application}[1]{\subparagraph*{#1}}

\application{Bounded languages}
 
Our first application concerns bounded languages.  A language $L
\subseteq \Sigma^*$ is \emph{bounded} if there exist words $w_1,
\ldots, w_n \in \Sigma^*$ such that $L \subseteq w_1^* \cdots w_n^*$.
This notion was introduced by Ginsburg
and Spanier~\cite{ginsburg1964bounded}.  Since a
bounded language as above can be characterized by the set of vectors
$(x_1,\ldots,x_n)\in\N^n$ for which $w_1^{x_1}\cdots w_n^{x_n}\in L$,
bounded languages are quite amenable to analysis. This has led to a
number of applications to concurrent recursive
programs~\cite{DBLP:conf/popl/EsparzaG11,DBLP:conf/lics/EsparzaGM12,DBLP:journals/toplas/EsparzaGP14,DBLP:journals/fmsd/GantyMM12,long2012language},
but also counter systems~\cite{demri2010model} and WSTS~\cite{DBLP:journals/tcs/ChambartFS16}.

Boundedness has been shown decidable for \emph{context-free languages}
by Ginsburg and Spanier~\cite{ginsburg1964bounded}
($\PTIME$-completeness by
Gawrychowski~et~al.~\cite{gawrychowski2010finding}) and hence also for
\emph{regular languages} ($\NL$-completeness also
in~\cite{gawrychowski2010finding}), for \emph{equal matrix languages}
by Siromoney~\cite{siromoney1969characterization}, and for trace
languages of \emph{complete deterministic well-structured transition
  systems} by
Chambart~et~al.~\cite{DBLP:journals/tcs/ChambartFS16}. The latter
implies that boundedness is decidable for coverability languages of
deterministic vector addition systems, in which case
$\EXPSPACE$-completeness was shown by
Chambart~et~al.~\cite{DBLP:journals/tcs/ChambartFS16} (the upper bound
had been established by Blockelet and
Schmitz~\cite{BlockeletSchmitz2011}).

We use \cref{thm:approximation-unboundedness} to show the following.
\begin{thm}\label{thm:boundedness}
Given a VAS, it is decidable whether its language is bounded.
\end{thm}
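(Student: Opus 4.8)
The plan is to apply \cref{thm:approximation-unboundedness} with the infinity-like predicate that captures boundedness. The first step is to express boundedness as (the negation of) an unboundedness predicate. Being bounded in the sense of Ginsburg and Spanier is a property of $L$, not of $\factors{L}$, so I cannot use it directly; instead I will exploit the fact that $L\subseteq w_1^*\cdots w_n^*$ is related to a well-known pumping characterization. Concretely, say that $L$ has an \emph{$m$-letter alternation pattern} if there are letters $a_1,\ldots,a_m\in\Sigma$ and words witnessing $(a_1,a_2,\ldots,a_m)\in\factors[m]{L}$ together with a ``stacking'' phenomenon: there is a word in $L$ of the form $u_0 a_1 u_1 a_2 u_2 \cdots a_m u_m$ where the $a_i$ alternate between two distinct letters, i.e., $a_i\ne a_{i+1}$ for all $i$. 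I will define, for each fixed choice of a pair $\{a,b\}\subseteq\Sigma$, the predicate $\pP_{a,b}$ on subsets of $\Sigma^*$ by $\pP_{a,b}(K)$ iff $K$ contains arbitrarily long words of the form $(ab)^k$ or $(ba)^k$ as factors; more robustly, $\pP_{a,b}(K)$ iff $\{k \mid (ab)^k \in K\}$ is infinite. One checks \cref{axiom:1dim:upward,axiom:1dim:union} immediately, and \cref{axiom:1dim:concatenation} because an arbitrarily long $(ab)^k$ occurring as a factor of $\factors{KL}$ splits into a factor of $\factors{K}$ followed by a factor of $\factors{L}$, and one of the two parts must retain arbitrarily long such runs (here using that $(ab)^k$ is itself a concatenation of two words each of which is a prefix/suffix of some $(ab)^j$).

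The key step is then the classical fact, going back to Ginsburg and Spanier, that a language $L\subseteq\Sigma^*$ is bounded if and only if for every pair of distinct letters $a,b\in\Sigma$ there is a bound on the number of alternations between $a$-blocks and $b$-blocks occurring in words of $L$ — equivalently, $L$ is \emph{not} bounded iff there exist distinct $a,b$ with $\pP_{a,b}(\factors{L})$ (after replacing $(ab)^k$ by the more permissive ``$k$ alternations of $a$ and $b$'', which is still an unboundedness predicate by the same reasoning). This is where I expect the main obstacle to lie: pinning down the exact combinatorial characterization of boundedness in a form that is simultaneously (i) equivalent to Ginsburg–Spanier boundedness, (ii) a property of $\factors{L}$ only, and (iii) satisfies \cref{axiom:1dim:concatenation}. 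The subtlety in (iii) is that unboundedly many alternations in a concatenation $KL$ need not be inherited by $K$ or $L$ alone if the ``boundary'' between them can itself contribute an unbounded number of alternations — but since there is exactly one boundary, only finitely many alternations are lost, so unboundedly many must remain on one side; I will need to phrase the predicate so this argument goes through cleanly (using factors rather than full words is what makes it work).

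Granting the characterization, the proof concludes as follows. Given a VAS $V$, use \cref{thm:approximation-unboundedness} to compute a regular language $R$ with $L(V)\subseteq R$ and $\pP(\factors{L(V)})\iff\pP(\factors{R})$ for every $1$-dimensional unboundedness predicate $\pP$; in particular this holds simultaneously for all the finitely many predicates $\pP_{a,b}$, $a\ne b\in\Sigma$. By the characterization, $L(V)$ is bounded iff $\pP_{a,b}(\factors{L(V)})$ fails for all pairs $a\ne b$, iff $\pP_{a,b}(\factors{R})$ fails for all pairs $a\ne b$, iff $R$ is bounded. Since boundedness of regular languages is decidable (Ginsburg–Spanier; in fact $\NL$-complete by Gawrychowski et al.~\cite{gawrychowski2010finding}), we decide boundedness of $R$ and return the answer. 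The only remaining point is to double-check that the characterization's ``for all pairs $a,b$'' quantifier is over a fixed finite set depending only on $\Sigma$, so that finitely many invocations of the predicate machinery suffice — which it is.
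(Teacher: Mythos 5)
Your overall strategy---find finitely many unboundedness predicates whose conjunction characterizes boundedness, apply \cref{thm:approximation-unboundedness}, and then decide boundedness of the regular over-approximation $R$ via Ginsburg--Spanier---has the right shape, and the final reduction to the regular case matches the paper. But the characterization you rely on is false, and it is exactly the point you flagged as the likely obstacle. The predicate $\pPs_{a,b}$ (arbitrarily long factors $(ab)^k$, or unboundedly many alternations between $a$ and $b$) does not characterize non-boundedness: the language $(ab)^*$ is bounded (it is contained in $w^*$ for $w=ab$), yet $(ab)^k$ occurs as a factor for every $k$, so $\pPs_{a,b}$ holds on its factor set. More generally $w^*$ for any $w$ containing both letters defeats any letter-alternation criterion. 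The correct combinatorial witness for non-boundedness is a pair of \emph{non-commuting words} $u,v$ together with a context $x,y$ such that $x\{u,v\}^*y\subseteq L$---an existential over pairs of words, not letters, requiring a genuine branching structure rather than a single long alternation---and it is not clear how to package this as finitely many unboundedness predicates.

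The paper sidesteps all of this by the direct move you explicitly ruled out: it takes $\pP[notb](K)$ to be ``$K$ is not bounded'' and observes (\cref{fact:boundedness-factors}) that $L$ is bounded if and only if $\factors{L}$ is bounded, so non-boundedness \emph{is} a property of the factor set after all (the factor set of a single word is finite, hence bounded, and bounded languages are closed under concatenation). \Cref{axiom:1dim:upward} is immediate since subsets of bounded languages are bounded, and \cref{axiom:1dim:union,axiom:1dim:concatenation} follow because bounded languages are closed under union and concatenation (\cref{fact:boundedness-closed}). To repair your proof, drop the alternation predicates and prove \cref{fact:boundedness-factors} instead; the remainder of your argument then goes through unchanged.
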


The rest of this section is devoted to the proof of \cref{thm:boundedness}.
Let $\pP[notb]$ be the $1$-dimensional predicate that holds for a
language $K \subseteq \Sigma^*$ if and only if $K$ it is not
bounded. We plan to apply~\cref{thm:approximation-unboundedness} to $\pP[not]$, but it
allows us to decide only whether $\pP[notb](\factors{L})$ for a given VAS language $L$.
Thus we need the following fact, which we prove in a moment.
\begin{fact}\label{fact:boundedness-factors}
A language  $L\subseteq\Sigma^*$ is bounded if and only if $F(L)$ is bounded.
\end{fact}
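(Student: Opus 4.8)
The plan is to prove the two directions of the equivalence separately, using that $L \subseteq \factors{L}$ makes one direction immediate. Since $L \subseteq \factors{L}$, if $\factors{L}$ is bounded then $L$ is bounded (a subset of $w_1^*\cdots w_n^*$ is again of that form). So the content is entirely in the forward direction: assuming $L \subseteq w_1^*\cdots w_n^*$, I want to show $\factors{L}$ is contained in some product of stars.

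First I would reduce to understanding the factors of a single bounded "block" $w_1^*\cdots w_n^*$. Every factor of a word $w_1^{x_1}\cdots w_n^{x_n}$ is itself a factor of such a word, so it suffices to show $\factors{w_1^*\cdots w_n^*}$ is bounded. The key observation is that a factor of $w_1^{x_1}\cdots w_n^{x_n}$ has the shape $s\, w_i^{y_i} w_{i+1}^{y_{i+1}} \cdots w_{j}^{y_j}\, p$, where $s$ is a suffix of $w_i$ and $p$ is a prefix of $w_j$ (with appropriate degenerate cases when the factor sits inside a single block). Since there are only finitely many suffixes of $w_i$ and finitely many prefixes of $w_j$, and finitely many choices of the index interval $[i,j]$, the set $\factors{w_1^*\cdots w_n^*}$ is a finite union of languages of the form $\{s\} w_i^* w_{i+1}^* \cdots w_j^* \{p\}$, each of which is bounded. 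A finite union of bounded languages is bounded (concatenate the bounding words of all the unionands, interleaving as needed, or more simply observe that if $L_1 \subseteq u_1^*\cdots u_k^*$ and $L_2 \subseteq v_1^*\cdots v_m^*$ then $L_1 \cup L_2 \subseteq u_1^*\cdots u_k^* v_1^*\cdots v_m^*$), so $\factors{w_1^*\cdots w_n^*}$ is bounded.

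Then, since $L$ bounded means $L \subseteq w_1^*\cdots w_n^*$ for some words $w_i$, we get $\factors{L} \subseteq \factors{w_1^*\cdots w_n^*}$, which is bounded by the previous paragraph, and hence $\factors{L}$ is bounded. Combined with the trivial direction, this proves the fact.

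The main obstacle, such as it is, is just being careful about the case analysis for the shape of a factor — in particular the degenerate cases where a factor is entirely contained within one occurrence of some $w_i$, or spans only a prefix/suffix portion, and making sure the "finite union of bounded is bounded" closure property is stated cleanly. None of this is deep; the fact is essentially a bookkeeping exercise once one notices that taking factors of a product of stars only introduces finitely many boundary words and never increases the number of "star directions" beyond $n$.
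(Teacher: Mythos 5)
Your proof is correct and follows essentially the same route as the paper's: both reduce to bounding the factors of $w_1^*\cdots w_n^*$ by an explicit bounded superset built from finitely many boundary words together with the blocks $w_i^*$, and both invoke closure of bounded languages under concatenation (and, in your case, union). The paper packages this slightly more compactly as the single concatenation $F(w_1)\,w_1^*\,F(w_1)\cdots F(w_n)\,w_n^*\,F(w_n)$, which absorbs your case analysis over index intervals and the degenerate single-block case.
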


Now we need to show that $\pP[notb]$ is indeed an unboundedness
predicate, meaning that it satisfies
\cref{axiom:1dim:upward,axiom:1dim:union,axiom:1dim:concatenation}.
By definition of boundedness, $\pP[notb]$ clearly fulfills
\cref{axiom:1dim:upward}: The subset of any bounded language is
bounded itself. \Cref{axiom:1dim:union,axiom:1dim:concatenation} are
implied by \cref{fact:boundedness-factors} and the following.
\begin{fact}\label{fact:boundedness-closed}
If $K$ and $L$ are bounded then both $K \cup L$ and $KL$ are bounded as well.
\end{fact}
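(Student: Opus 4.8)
The plan is simply to unfold the definition of boundedness and exhibit, in each case, an explicit product of single-word stars that contains the language in question. Fix words $u_1,\ldots,u_m\in\Sigma^*$ with $K\subseteq u_1^*\cdots u_m^*$ and words $v_1,\ldots,v_n\in\Sigma^*$ with $L\subseteq v_1^*\cdots v_n^*$, which exist by assumption.

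For the concatenation, I would observe directly that
\[
  KL\ \subseteq\ \bigl(u_1^*\cdots u_m^*\bigr)\bigl(v_1^*\cdots v_n^*\bigr)\ =\ u_1^*\cdots u_m^*\, v_1^*\cdots v_n^*,
\]
which is already of the form required by the definition of boundedness; hence $KL$ is bounded, witnessed by the list $u_1,\ldots,u_m,v_1,\ldots,v_n$.

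For the union, the key (and only) observation is that $\eps\in w^*$ for every word $w$, so any word of $u_1^*\cdots u_m^*$ can be padded with empty blocks from $v_1^*,\ldots,v_n^*$ on the right, and symmetrically for $v_1^*\cdots v_n^*$ on the left. Concretely,
\[
  u_1^*\cdots u_m^*\ \subseteq\ u_1^*\cdots u_m^*\, v_1^*\cdots v_n^*\qquad\text{and}\qquad v_1^*\cdots v_n^*\ \subseteq\ u_1^*\cdots u_m^*\, v_1^*\cdots v_n^*,
\]
so $K\cup L\subseteq u_1^*\cdots u_m^*\, v_1^*\cdots v_n^*$ and $K\cup L$ is bounded as well, again with witness list $u_1,\ldots,u_m,v_1,\ldots,v_n$.

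I do not anticipate any genuine obstacle here: the statement is a routine closure property. The only point that requires a moment's care is that boundedness is defined through containment in a product of stars of \emph{single} words (rather than, say, stars of finite sets), and the argument above stays within that format, since it only rearranges and pads existing single-word blocks.
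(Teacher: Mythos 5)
Your proof is correct and matches the paper's argument exactly: both take the witness lists for $K$ and $L$, concatenate them into a single product $u_1^*\cdots u_m^*v_1^*\cdots v_n^*$, and observe that this contains $KL$ and (via empty blocks) $K\cup L$. The paper states this in one line; you have merely spelled out the padding step for the union.
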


\newcommand{\fofword}[1]{\factors{#1}}
Let us prove \cref{fact:boundedness-closed,fact:boundedness-factors} and
begin with \cref{fact:boundedness-closed}.  If $K$ and $L$ are
bounded, then $K \subseteq w_1^* \cdots w_n^*$ and $L \subseteq
w_{n+1}^* \cdots w_m^*$ for some $w_1, \ldots, w_m \in \Sigma^*$.
Then we have $K \cup L, KL \subseteq w_1^* \cdots w_m^*$, which shows
\cref{fact:boundedness-closed}.  In order to show
\cref{fact:boundedness-factors}, observe first that for each
individual word $w\in\Sigma^*$, the language $\fofword{w}$ is
bounded because it is finite.  Thus, if $L\subseteq w_1^*\cdots
w_n^*$, then $\factors{L}$ is included in 
$\fofword{w_1}w_1^*\fofword{w_1}\cdots \fofword{w_n}w_n^* \fofword{w_n}$,
which is bounded as a concatenation of bounded languages by
\cref{fact:boundedness-closed}.  Thus, $F(L)$ is bounded as
well. Conversely, $L$ inherits boundedness from its superset
$\factors{L}$.

To conclude  \cref{thm:boundedness}, we need to show that given regular language
$R\subseteq\Sigma^*$, it is decidable whether
$\pP[notb](\factors{R})$.  By
\cref{fact:boundedness-factors}, this amounts to checking whether $R$
is bounded. This is decidable even for context-free
languages~\cite{ginsburg1964bounded} (and in $\NL$ for regular
ones~\cite{gawrychowski2010finding}).

\subparagraph*{Separability}
We can also use our results to decide
whether two VAS languages are separable by a bounded regular
language. Very generally, if $\cS$ is a class of sets, we say that a
set $K$ is \emph{separable from} a set $L$ \emph{by} a set from $\cS$
if there is a set $S$ in $\cS$ so that $K\subseteq S$ and $L\cap
S=\emptyset$. 

The separability problem was recently investigated for VAS languages
and several subclasses thereof.
In~\cite{DBLP:journals/dmtcs/CzerwinskiMRZZ17} it is shown that
separability of VAS languages by piecewise testable languages (a
subclass of regular languages) is decidable.  Decidability of separability of VAS
languages by regular languages is still open, but it is
known for several subclasses of VAS
languages~\cite{DBLP:conf/icalp/ClementeCLP17,DBLP:conf/stacs/ClementeCLP17,DBLP:conf/lics/CzerwinskiL17}.
In~\cite{DBLP:journals/corr/CzerwinskiL17a} it is shown that any two
disjoint VAS coverability languages are separable by a regular
language.  Here, using~\cref{thm:boundedness} we are able to show the following.

\begin{thm}\label{thm:bounded-separability}
Given two VAS languages $K$ and $L$, it is decidable whether $K$ is separable
from $L$ by a bounded regular language.
\end{thm}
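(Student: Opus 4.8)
The plan is to reduce bounded-regular separability of two VAS languages $K$ and $L$ to a boundedness question about a single VAS language, so that \cref{thm:boundedness} applies. The key observation is a Boolean-algebra-style argument: if $K$ is separated from $L$ by a bounded regular language $S$, then in particular $K \subseteq S$, so $K$ itself is bounded (boundedness is inherited by subsets, \cref{axiom:1dim:upward} for $\pP[notb]$). Conversely — and this is the crucial direction — if $K$ is bounded, say $K \subseteq w_1^* \cdots w_n^*$, then $K$ is included in the bounded language $B = w_1^* \cdots w_n^*$, and one shows that the \emph{downward closure with respect to the subword order inside $B$} gives a bounded \emph{regular} superset of $K$. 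More precisely, the set of vectors $(x_1,\ldots,x_n) \in \N^n$ with $w_1^{x_1}\cdots w_n^{x_n} \in K$ is, by the characterization of VAS semilinearity restricted to bounded sets, effectively semilinear (this is where we invoke that VAS reachability restricted to a bounded language is semilinear, a classical consequence of the reachability machinery); taking its upward closure under the coordinatewise order yields an effectively computable semilinear set whose associated subset of $w_1^* \cdots w_n^*$ is regular, bounded, contains $K$, and is disjoint from $L$ iff $L$ avoids it. So the whole question becomes: does there exist a bounded regular $S \supseteq K$ disjoint from $L$, and this is equivalent to asking whether the particular canonical candidate $S$ (upward closure of $K$'s Parikh-type image inside a suitable $w_1^* \cdots w_n^*$) is disjoint from $L$.

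First I would prove that if any bounded regular separator exists then $K$ is bounded, which is immediate. Next I would show that if $K$ is bounded then there is a \emph{smallest} bounded regular superset of $K$ inside a fixed tuple $w_1^*\cdots w_n^*$, in the sense that every bounded regular $S$ with $K \subseteq S \subseteq w_1^*\cdots w_n^*$ contains the canonical upward-closed one; here I use that a regular subset of $w_1^*\cdots w_n^*$ corresponds to a semilinear set of exponent vectors, and that a semilinear set containing a given set is refined by — but need not contain — its upward closure, so I instead argue that disjointness from $L$ can be pushed to the upward closure because $L \cap w_1^*\cdots w_n^*$ corresponds to a set of exponent vectors that is \emph{downward closed enough}... — this is the delicate point and I would instead argue directly that $K$ is separable from $L$ by a bounded regular language iff $K$ is bounded \emph{and} $L \cap S = \emptyset$ for the specific $S$ = the regular language corresponding to the upward closure of $K$'s exponent set inside the chosen $w_1^*\cdots w_n^*$. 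The algorithm then is: enumerate candidate tuples $(w_1,\ldots,w_n)$ via \cref{thm:boundedness} (which, when it reports ``bounded'', can be made to output a witnessing tuple), compute the semilinear exponent set of $K$, take its upward closure, form $S$, and test $L \cap S = \emptyset$; the latter is decidable since $S$ is regular and VAS languages are closed under regular intersection with decidable emptiness (this is the reachability problem).

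The main obstacle I anticipate is establishing that checking disjointness against the single canonical upward-closed candidate $S$ is sound and complete for separability — i.e., that enlarging $K$ to the upward closure of its exponent set inside $w_1^* \cdots w_n^*$ does not spuriously hit $L$ whenever \emph{some} bounded regular separator exists. The resolution should come from analyzing $L \cap (w_1^* \cdots w_n^*)$: its set of exponent vectors is again effectively semilinear, and one must show that if it is disjoint from $K$'s exponent set then it is disjoint from the upward closure of $K$'s exponent set — which is \emph{false} in general, so the actual argument must be subtler, presumably iterating over finitely many ``directions'' of unboundedness or using that any bounded regular separator's exponent set, being semilinear and containing $K$'s, must already separate, and then observing the family of semilinear supersets of $K$'s exponent set disjoint from $L$'s is closed under a least element within a bounded region. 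I expect the paper handles this by a more careful reduction — possibly reducing to \cref{thm:boundedness} applied not to $K$ alone but to a VAS language built from $K$ and $L$ jointly (e.g. the ``shuffle'' or a product construction marking which letters come from $K$ versus witnessing non-membership in $L$), so that unboundedness of that auxiliary language captures exactly the failure of bounded-regular separability, and the obstacle is in verifying the axioms of an unboundedness predicate for that combined construction.
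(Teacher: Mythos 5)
Your opening reduction matches the paper's: check that $K$ is bounded (via \cref{thm:boundedness}), enumerate to find $w_1,\ldots,w_n$ with $K\subseteq w_1^*\cdots w_n^*$, replace $L$ by $L\cap w_1^*\cdots w_n^*$, and translate everything into sets of exponent vectors in $\N^n$. From there, however, the argument has two genuine gaps. First, the claim that the exponent set $U=\{(x_1,\ldots,x_n)\mid w_1^{x_1}\cdots w_n^{x_n}\in K\}$ is effectively semilinear is false: as the paper notes, $U$ is a \emph{section of a VAS reachability set}, and such sets are not semilinear in general (reachability sets themselves already fail to be semilinear, cf.\ the Hopcroft--Pansiot example), so there is no ``classical consequence of the reachability machinery'' to invoke here. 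Second, the strategy of testing a single canonical candidate separator (the upward closure of $K$'s exponent set) is unsound, as you yourself concede mid-proof: disjointness of $L$'s exponent set from $K$'s does not imply disjointness from the upward closure, and there is no reason that a bounded regular separator, if one exists, must contain that particular candidate. A smaller but real issue: by Ginsburg and Spanier, regular subsets of $w_1^*\cdots w_n^*$ correspond to \emph{recognizable} subsets of $\N^n$, a strictly smaller class than the semilinear sets, so even a repaired version of your argument would have to work with recognizable rather than semilinear separators.

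The paper resolves exactly the difficulty you flagged by not constructing a separator at all. After \cref{separability-commutative} (the Ginsburg--Spanier translation to exponent vectors), it observes that the two exponent sets are effectively sections of VAS reachability sets and then invokes the theorem of Clemente et al.\ that separability of two such sections by a recognizable subset of $\N^n$ is decidable. That external result is the real engine of the proof; it cannot be replaced by a semilinearity computation or by checking one explicit candidate separator.
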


Clearly, in order for that to hold, $K$ has to be bounded, which we
can decide. Moreover, by enumerating expressions $w_1^*\cdots w_n^*$,
we can find one with $K\subseteq w_1^*\cdots w_n^*$. Since the bounded regular languages (BRL) are
closed under intersection (recall that a subset of a bounded language
is again bounded), $K$ and $L$ are separable by a BRL if and only if
$L_0=K$ and $L_1=L\cap w_1^*\cdots w_n^*$ are separable by a BRL. Since now
both input languages are included in $w_1^*\cdots w_n^*$, we can reformulate
the problem into one over vector sets.
\begin{lem}\label{separability-commutative}
  Let $L_0,L_1\subseteq w_1^*\cdots w_n^*$ and
  $U_i=\{(x_1,\ldots,x_n)\in\N^n \mid w_1^{x_1}\cdots w_n^{x_n}\in
  L_i\}$ for $i\in\{0,1\}$.  Then, $L_0$ is separable from $L_1$ by a
  BRL if and only if $U_0$ is separable from $U_1$ by a recognizable
  subset of $\N^n$.
\end{lem}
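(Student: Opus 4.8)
The plan is to use the natural bijection between vectors $(x_1,\ldots,x_n)\in\N^n$ and words $w_1^{x_1}\cdots w_n^{x_n}$, but since several vectors may map to the same word (the $w_i$ need not be independent), one must be careful: the map $\varphi\colon\N^n\to\Sigma^*$, $\varphi(x_1,\ldots,x_n)=w_1^{x_1}\cdots w_n^{x_n}$, is the object to work with, and $L_i\cap(w_1^*\cdots w_n^*)=\varphi(U_i)$ and $U_i=\varphi^{-1}(L_i)$. First I would record the elementary fact that for any regular $R\subseteq\Sigma^*$, the set $\varphi^{-1}(R)\subseteq\N^n$ is a \emph{recognizable} subset of the monoid $\N^n$: this is because recognizable sets are preserved under preimages of monoid morphisms, and $\varphi$ factors through the morphism $\N^n\to\Sigma^*$ sending the $i$-th generator to $w_i$ — wait, that is a morphism only if we compose with something capturing the order. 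More precisely, $\varphi$ is the composition of the monoid morphism $\psi\colon\N^n\to(\Sigma^*)^n$ (diagonal-like, $e_i\mapsto$ the tuple with $w_i$ in coordinate $i$) — this is getting complicated; the clean statement is simply that $R$ regular implies $\varphi^{-1}(R)$ is $\N^n$-recognizable, which follows because $\varphi^{-1}(R)$ is accepted by the finite automaton for $R$ read along the fixed-order path, i.e. it is a finite union of products $A_1\times\cdots\times A_n$ where $A_i\subseteq\N$ is ultimately periodic (hence recognizable in $\N$), giving recognizability in the product $\N^n$.

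For the forward direction, suppose $S$ is a bounded regular language with $L_0\subseteq S$ and $L_1\cap S=\emptyset$. Replacing $S$ by $S\cap(w_1^*\cdots w_n^*)$, which is still bounded regular (BRL closed under intersection with the bounded language $w_1^*\cdots w_n^*$, and regular languages closed under intersection), we may assume $S\subseteq w_1^*\cdots w_n^*$. Set $U=\varphi^{-1}(S)$, a recognizable subset of $\N^n$ by the preliminary observation. Then $U_0=\varphi^{-1}(L_0)\subseteq\varphi^{-1}(S)=U$ since $L_0\subseteq S$, and $U_1\cap U=\varphi^{-1}(L_1)\cap\varphi^{-1}(S)=\varphi^{-1}(L_1\cap S)=\varphi^{-1}(\emptyset)=\emptyset$. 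So $U$ separates $U_0$ from $U_1$. For the converse, suppose $U\subseteq\N^n$ is recognizable with $U_0\subseteq U$ and $U_1\cap U=\emptyset$. A recognizable subset of $\N^n$ is a finite union of products of ultimately periodic subsets of $\N$, hence $\varphi(U)$ is a finite union of languages of the form $\varphi(A_1\times\cdots\times A_n)$ with each $A_i\subseteq\N$ ultimately periodic; such a language is regular (it is the image of the regular language $\{w_1^{a_1}\cdots w_n^{a_n}\mid a_i\in A_i\}$ under the identity, and $\{w_i^{a}\mid a\in A_i\}$ is regular since $A_i$ is ultimately periodic) and bounded (contained in $w_1^*\cdots w_n^*$), so $S:=\varphi(U)$ is a BRL. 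Now $L_0=\varphi(U_0)\subseteq\varphi(U)=S$, and for the disjointness: if $v\in L_1\cap S$ then $v\in w_1^*\cdots w_n^*$ so $v=\varphi(x)$ for some $x\in U_1$, and $v\in S=\varphi(U)$ gives $v=\varphi(y)$ for some $y\in U$; here one must argue $x\in U$, which follows because $\varphi^{-1}(v)$ is a coset-like set on which membership in the recognizable set $U$ is determined by the class of $v$ — more cleanly, since $U=\varphi^{-1}(S)$ would hold if we chose $S$ minimally, but in general we instead note $x\in\varphi^{-1}(\varphi(U))$ and use that $U$ recognizable implies $\varphi^{-1}(\varphi(U))$ has the same image, so we may replace $U$ by $\varphi^{-1}(S)$ at the outset, which is recognizable, contains $U\supseteq U_0$, and is disjoint from $U_1$ since $\varphi^{-1}(S)\cap U_1=\varphi^{-1}(S)\cap\varphi^{-1}(L_1)=\varphi^{-1}(S\cap L_1)$ and we have not yet established $S\cap L_1=\emptyset$ — so the correct order is: from $U$ recognizable build $S=\varphi(U)$ BRL, then set $U':=\varphi^{-1}(S)\supseteq U\supseteq U_0$, check $U'\cap U_1=\emptyset$, done.

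I expect the main obstacle to be exactly this bookkeeping around the non-injectivity of $\varphi$: one cannot naively transport separators back and forth because $\varphi^{-1}(\varphi(U))$ may be strictly larger than $U$. The fix is to always close off under $\varphi^{-1}\varphi$ (or dually $\varphi\varphi^{-1}$), i.e. to work with separators of the form $\varphi^{-1}(S)$ for $S$ a BRL, respectively $S=\varphi(U)$ for $U$ recognizable, and to verify that these closure operations preserve the ``contains $U_0$ / disjoint from $U_1$'' (resp. ``contains $L_0$ / disjoint from $L_1$'') properties — which they do, precisely because $U_i=\varphi^{-1}(L_i)$ and $L_i\cap(w_1^*\cdots w_n^*)=\varphi(U_i)$. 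The only genuinely non-trivial ingredient is the characterization of recognizable subsets of $\N^n$ as finite unions of products of ultimately periodic sets, which is classical (Mezei's theorem for commutative monoids), together with the equally classical fact that $\{w^a\mid a\in A\}$ is regular for $A\subseteq\N$ ultimately periodic.
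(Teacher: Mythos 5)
Your proof is correct and follows essentially the same route as the paper's: transfer separators along the map $\varphi(x_1,\ldots,x_n)=w_1^{x_1}\cdots w_n^{x_n}$ using the Ginsburg--Spanier correspondence between regular subsets of $w_1^*\cdots w_n^*$ and recognizable subsets of $\N^n$ (you additionally spell out the automaton/Mezei arguments that the paper only cites, plus the harmless step of intersecting the given separator with $w_1^*\cdots w_n^*$). Your worry about non-injectivity of $\varphi$ is a red herring: since $U_i=\varphi^{-1}(L_i)$ is a full preimage, $x\in U_1\iff\varphi(x)\in L_1$ for \emph{every} preimage $x$, so the direct arguments ($\varphi^{-1}(S)$ separates in one direction, $\varphi(U)$ in the other) go through without the $\varphi^{-1}\varphi$-closure detour.
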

Recall that a subset $S\subseteq\N^n$ is \emph{recognizable} if there is a morphism
$\varphi\colon\N^n\to F$ into a finite monoid $F$ with $S=\varphi^{-1}(\varphi(S))$.
\Cref{separability-commutative} is a straightforward application of Ginsberg and Spanier's
characterization of BRL~\cite{GinsburgSpanier1966a}.

Since in our case, $L_0$ and $L_1$ are VAS languages, a standard construction shows that $U_0$ and $U_1$
are (effectively computable) sections of VAS reachability sets. 
Here, sections are defined as follows. For a subset $I\subseteq [1,n]$,
let $\pi_I\colon \N^n\to\N^{|I|}$ be the projection onto the coordinates in $I$.
Then, every set of the form $\pi_{[1,n]\setminus I}(S\cap \pi_{I}^{-1}(x))$ for some
$I\subseteq[1,n]$ and $x\in\N^{|I|}$ is called a \emph{section} of $S\subseteq\N^n$.
Thus, the following result by Clemente~et~al.~\cite{DBLP:conf/stacs/ClementeCLP17} allows us to decide separability by BRL.
\begin{thm}[\cite{DBLP:conf/stacs/ClementeCLP17}]
  Given two sections $S_0,S_1\subseteq\N^n$ of reachability sets of
  VAS, it is decidable whether $S_0$ is separable from $S_1$ by a
  recognizable subset of $\N^n$.
\end{thm}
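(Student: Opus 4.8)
The plan is to pass from the combinatorics of $\N^n$ to the profinite topology and then reduce the resulting infinitary condition to finitely many reachability queries. First I would record the shape of recognizable sets: by Mezei's theorem a recognizable subset of $\N^n$ is a finite union of products $R_1\times\cdots\times R_n$ with each $R_j\subseteq\N$ recognizable, i.e.\ ultimately periodic. Consequently the recognizable subsets of $\N^n$ are exactly the clopen subsets of the profinite completion $\widehat{\N}^{\,n}$, where $\widehat{\N}=\N\sqcup\widehat{\Z}$: a point of $\widehat{\N}$ is either a finite value or a point ``at infinity'' carrying a profinite residue in $\widehat{\Z}$. Since in a Stone space two sets are separated by a clopen set precisely when their topological closures are disjoint, the theorem reduces to deciding whether $\overline{S_0}\cap\overline{S_1}=\emptyset$ in $\widehat{\N}^{\,n}$.

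I would next unfold what a common limit point $z\in\overline{S_0}\cap\overline{S_1}$ looks like. Writing $I\subseteq[1,n]$ for the coordinates at which $z$ is infinite and $c=\pi_{[1,n]\setminus I}(z)$ for its finite part, membership $z\in\overline{S_j}$ unwinds to: for every threshold $t$ and every modulus $m$ there is some $x\in S_j$ with $\pi_{[1,n]\setminus I}(x)=c$, with $\pi_I(x)\ge t$ in every coordinate, and with $\pi_I(x)$ congruent to the residue of $z$ modulo $m$. Fixing the finite part turns each $S_j$ into a section, so the genuine core becomes: given two sections $T_0,T_1\subseteq\N^d$, decide whether some $r\in\widehat{\Z}^{\,d}$ is simultaneously a limit at infinity (in all $d$ coordinates) of both $T_0$ and $T_1$. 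Because $\widehat{\Z}$ is an inverse limit of the finite groups $\Z/m\Z$ and the relevant residue sets are finite, a standard compactness (K\"onig) argument shows that such an $r$ exists iff, for \emph{every} modulus $m$, the finite sets
\[ R_j(m)=\{\,\rho\in(\Z/m\Z)^d \mid \forall t\ \exists x\in T_j\colon x\ge t \text{ and } x\equiv\rho \pmod m\,\} \]
satisfy $R_0(m)\cap R_1(m)\ne\emptyset$.

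For a \emph{fixed} $m$ each set $R_j(m)$ is computable: tracking the $d$ coordinates modulo $m$ in the finite control of the defining VAS turns ``$x\equiv\rho \pmod m$'' into a control condition, and ``for all $t$ there is such an $x$'' is simultaneous unboundedness of the chosen coordinates, which is decidable for (sections of) VAS reachability sets. The two remaining difficulties are the universal quantifier over all moduli $m$ and the existential quantifier over the finite part $c$, and resolving them is the heart of the argument. The plan is to invoke the structural theory of VAS: the unbounded behaviour of a reachability set on a simultaneously unbounded set of coordinates is generated by the displacement vectors of pumpable cycles, so the residues realised at infinity form a union of cosets of a computable subgroup $G_j\le\Z^d$ over a semilinear set of base points (in a finite group the submonoid generated by the cycle displacements is already a subgroup), and likewise the admissible finite parts $c$ form a semilinear set. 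These descriptions make both infinitary quantifiers effective: the condition $R_0(m)\cap R_1(m)\ne\emptyset$ for all $m$ becomes an intersection-of-affine-lattices question over $\Z$ (solvable via the Chinese Remainder Theorem and the saturation of $G_0+G_1$), and the search over $c$ becomes an intersection of semilinear sets. Running over the finitely many choices of $I$ then yields the decision procedure, each atomic step reducing to VAS reachability (decidable) together with linear algebra.

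The main obstacle is exactly this last reduction: the profinite separability criterion is inherently infinitary (one modulus and one threshold per test, and an unbounded finite part), and turning it into finitely many effective queries requires controlling the residue- and value-behaviour of VAS reachability sets on their unbounded coordinates. This is where the KLM decomposition (or an equivalent analysis of the periods of unbounded runs) enters, and it is the technically heaviest part; everything upstream is routine topology and the downstream lattice and semilinear computations are standard.
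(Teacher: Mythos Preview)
The paper does not prove this theorem; it is quoted as a result of Clemente, Czerwi\'nski, Lasota, and Paperman and used as a black box to conclude \cref{thm:bounded-separability}. There is therefore no proof in the present paper to compare your plan against.

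On the plan itself: the topological reformulation is sound---recognizable subsets of $\N^n$ are the clopens of the associated Stone space, so separability is disjointness of closures, and a common limit point decomposes into a finite part $c$ on some coordinates and profinite residues on the remaining unbounded ones. But the two steps you call ``the heart of the argument'' are not carried out, and one of them is mis-stated. The claim that ``the admissible finite parts $c$ form a semilinear set'' is unsupported: sections of VAS reachability sets are not semilinear in general, and there is no reason the set of $c$ for which the $I$-section is simultaneously unbounded should be semilinear either, so the promised reduction to an ``intersection of semilinear sets'' does not go through. Eliminating the universal quantifier over moduli likewise requires a concrete, computable description of the residues realised at infinity by a VAS reachability set, and you only gesture at ``pumpable cycles'' and ``the KLM decomposition'' without extracting any usable statement. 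Since these two quantifier eliminations are the entire content of the theorem once the topological translation is made, the plan stops exactly where the difficulty begins.
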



  

\application{Downward closures and simultaneus unboundedness}
We now illustrate how to compute downward closures using our results.
First of all, computability of downward closures for VAS languages
follows directly from \cref{thm:approximation-unboundedness} because
it implies $\dcl{R}=\dcl{L}$: For each word $w=a_1\cdots a_n$ with
$a_1,\ldots,a_n\in\Sigma$, consider the $n$-dimensional predicate
$\pPs_w$ which is satisfied for $S\subseteq(\Sigma^*)^n$ iff
$(a_1,\ldots,a_n)\in S$. Then $\pPs_w(\factors[n]{L})$ if and only if
$w\in\dcl{L}$. It is easy to check that this is an unboundedness
predicate. Hence, $\dcl{R}=\dcl{L}$.

However, in order to illustrate how to apply unboundedness predicates,
we present an alternative approach. In
\cite{DBLP:conf/icalp/Zetzsche15}, it was shown that if a language
class $\cC$ is closed under rational transductions (which is the case
for VAS languages), then downward closures are computable for $\cC$ if
and only if, given a language $L$ from $\cC$ and letters
$a_1,\ldots,a_n$, it is decidable whether $a_1^*\cdots
a_n^*\subseteq\dcl{L}$.  Let us show how to decide the latter using
unboundedness predicates.

For this, we use an $n$-dimensional predicate. For a subset
$S\subseteq(\Sigma^*)^n$, let $\dcl{S}$ be the set of all tuples
$(u_1,\ldots,u_n)\in(\Sigma^*)^n$ such that there is some
$(v_1,\ldots,v_n)\in S$ with $u_i\subword v_i$ for $i\in[1,n]$.  Our
predicate $\pP[sup]$ is satisfied for $S\subseteq(\Sigma^*)^n$ if and
only if $a_1^*\times\cdots \times a_n^*\subseteq S$. Then clearly
$\pP[sup](\factors[n]{L})$ if and only if $a_1^*\cdots
  a_n^*\subseteq\dcl{L}$.  It is easy to check that $\pP$ fulfills
  \cref{axiom:upward} and \cref{axiom:union}. For the latter, note
  that $a_1^*\times\cdots\times a_n^*\subseteq \dcl{(S_1\cup S_2)}$
  implies that for some $j\in\{1,2\}$, there are infinitely many
  $\ell\in\N$, with $(a_1^\ell,\ldots,a_n^\ell)\in S_j$ and hence
  $a_1^*\times\cdots\times a_n^*\subseteq\dcl{S_j}$. For \cref{axiom:concatenation},
we need a simple combinatorial argument:
\begin{lem}\label{lemma:sup}
If $a_1^*\times\cdots \times a_n^*\subseteq\dcl{\factors[n]{L_1\cdots L_k}}$, then $n=n_1+\cdots+n_k$
with $a_1^*\times\cdots\times a_n^*\subseteq\dcl{(\factors[n_1]{L_1}\times\cdots\times\factors[n_k]{L_k})}$.
\end{lem}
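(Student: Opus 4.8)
The plan is to first reduce the statement, via a ``collapsing'' identity turning tuples into concatenations, to a purely combinatorial claim about subword-downward-closed languages, and then prove that claim by a pigeonhole argument over the finitely many ways a sorted word $a_1^\ell\cdots a_n^\ell$ can be distributed over $k$ blocks.

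The collapsing identity I would use is: for any language $M\subseteq\Sigma^*$ and any $(u_1,\dots,u_r)\in(\Sigma^*)^r$, one has $(u_1,\dots,u_r)\in\dcl{\factors[r]{M}}$ if and only if $u_1\cdots u_r\in\dcl{M}$. Both directions are short: given a factor witness $w=x_0v_1x_1\cdots v_rx_r\in M$ with $u_i\subword v_i$, the occurrences of the $v_i$ in $w$ are disjoint and ordered, so $u_1\cdots u_r\subword w$; conversely, fixing an embedding of $u_1\cdots u_r$ into some $w\in M$ and letting $v_i$ be the factor of $w$ spanned by the image of $u_i$ yields $(v_1,\dots,v_r)\in\factors[r]{M}$ with $u_i\subword v_i$. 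Applying this with $M=L_1\cdots L_k$ rewrites the hypothesis as $a_1^*a_2^*\cdots a_n^*\subseteq\dcl{(L_1\cdots L_k)}$, and applying it with $M=L_j$ shows that the conclusion --- for a partition $[1,n]=I_1\sqcup\cdots\sqcup I_k$ into consecutive (possibly empty) intervals with $|I_j|=n_j$ --- is equivalent to: $\prod_{i\in I_j}a_i^*\subseteq\dcl{L_j}$ for every $j$. Finally, since a concatenation of downward-closed languages is downward closed, $\dcl{(L_1\cdots L_k)}=\dcl{L_1}\cdots\dcl{L_k}$; writing $R_j:=\dcl{L_j}$, it therefore suffices to prove: \emph{if $a_1^*\cdots a_n^*\subseteq R_1\cdots R_k$ with each $R_j$ downward closed, then $[1,n]$ splits into consecutive (possibly empty) intervals $I_1,\dots,I_k$ with $\prod_{i\in I_j}a_i^*\subseteq R_j$ for all $j$.}

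For this I would argue as follows. For every $\ell$, write $a_1^\ell\cdots a_n^\ell=r_1^{(\ell)}\cdots r_k^{(\ell)}$ with $r_j^{(\ell)}\in R_j$. As $r_j^{(\ell)}$ is a contiguous factor of the sorted word, for each group $i$ it contains a block of some $c_{i,j}^{(\ell)}\in[0,\ell]$ consecutive copies of $a_i$, and $\sum_j c_{i,j}^{(\ell)}=\ell$. Assign to $i$ a primary block $p_i^{(\ell)}$ maximizing $c_{i,j}^{(\ell)}$ over $j$, so that $c_{i,p_i^{(\ell)}}^{(\ell)}\ge\ell/k$; since $r_1^{(\ell)},\dots,r_k^{(\ell)}$ are read left to right, $i\mapsto p_i^{(\ell)}$ is non-decreasing. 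There are only finitely many non-decreasing maps $[1,n]\to[1,k]$, so one such map $p$ equals $p^{(\ell)}$ for infinitely many $\ell$; set $I_j:=p^{-1}(j)$, which are consecutive intervals partitioning $[1,n]$. Now, to see $\prod_{i\in I_j}a_i^*\subseteq R_j$, it is enough (using downward closedness of $R_j$, taking $m$ to be the maximum of the given exponents) to show $\prod_{i\in I_j}a_i^m\in R_j$ for every $m$. Given $m$, pick $\ell\ge km$ with $p^{(\ell)}=p$: for each $i\in I_j$ the factor $r_j^{(\ell)}$ contains $c_{i,j}^{(\ell)}\ge\ell/k\ge m$ copies of $a_i$, lying in $r_j^{(\ell)}$ in the order prescribed by $i$ (since $r_j^{(\ell)}$ is a factor of the sorted word), so $\prod_{i\in I_j}a_i^m\subword r_j^{(\ell)}\in R_j$, hence $\prod_{i\in I_j}a_i^m\in R_j$.

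I expect the only real obstacle to be the non-uniformity: which $R_j$ a group of equal letters ``belongs to'' can vary with $\ell$, so there is no single decomposition that visibly works for all exponents. The pigeonhole over primary-block assignments is precisely what circumvents this, the crucial point being that a primary block always retains at least a $1/k$ fraction of its group, hence arbitrarily many of its letters as $\ell\to\infty$. The remaining ingredients --- the collapsing identity, downward closedness of concatenations, and passing from tuples of exponents to a single maximal exponent --- are routine.
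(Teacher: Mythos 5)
Your proof is correct and follows essentially the same route as the paper's: both arguments pigeonhole over the finitely many monotone maps $[1,n]\to[1,k]$ obtained by assigning each letter group $a_i$ to a block of the concatenation that receives at least a $1/k$ fraction of its copies (the paper uses exponent $\ell k$ so that some block gets $\ell$ copies, you use $\ell\ge km$ so that some block gets $m$), and then read off the decomposition $n=n_1+\cdots+n_k$ from the convex preimages of the map that recurs infinitely often. Your explicit ``collapsing identity'' and the fact $\dcl{(L_1\cdots L_k)}=\dcl{L_1}\cdots\dcl{L_k}$ merely make precise the translation between factor tuples and scattered subwords that the paper performs implicitly, so the two proofs differ only in packaging.
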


It remains to show that for a regular language $R$, it is decidable
whether $a_1^* \cdots a_n^* \subseteq \dcl{R}$. Since it is easy to
construct an automaton for $\dcl{R}$, this amounts to a simple
inclusion check.

\application{Non-overlapping factors}
Our next example shows that under very mild assumptions on a
language $K$, one can decide whether the words in a VAS language $L$
contain arbitrarily many factors from $K$.  For $w\in\Sigma^*$ and
$K\subseteq\Sigma^+$, let $|w|_K$ be the largest number $m$ such that
there are $w_1,\ldots,w_m\in K$ with $(w_1,\ldots,w_m)\in
\factors[m]{w}$. Note that since $\eps \not\in K$, there is always
a maximal such $m$.  Consider the function $f_K\colon\Sigma^*\to\N$,
$w\mapsto |w|_K$. A function $f\colon \Sigma^*\to\Nom$ is
\emph{unbounded} on $L\subseteq\Sigma^*$ if for every $k\in\N$,
we have $f(w)\ge k$ for some $w\in L$.

\begin{thm}\label{non-overlapping-factors:decidable}
  If $\cC$ is a full trio with decidable emptiness problem, then given
  a VAS language $L$ and a language $K\subseteq\Sigma^+$ from $\cC$,
  it is decidable whether $f_K$ is unbounded on $L$. If $f_K$ is
  bounded on $L$, we can compute an upper bound.
\end{thm}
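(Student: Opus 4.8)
The plan is to define an $n$-dimensional predicate that captures unboundedness of $f_K$ and verify that it satisfies the three axioms of an unboundedness predicate, so that \cref{thm:approximation-unboundedness} applies. Concretely, fix $m \in \N$ and define the predicate $\pP[K,m]$ on subsets $S \subseteq (\Sigma^*)^m$ to hold if and only if there exist $w_1, \ldots, w_m \in K$ with $(w_1, \ldots, w_m) \in S$. Then $\pP[K,m](\factors[m]{L})$ holds iff there is a word in $L$ with at least $m$ factors from $K$, i.e.\ iff $f_K(w) \ge m$ for some $w \in L$. Since $f_K$ is monotone in the sense that $|w|_K \ge m$ implies $|w|_K \ge m'$ for $m' \le m$, the function $f_K$ is unbounded on $L$ precisely when $\pP[K,m](\factors[m]{L})$ holds for \emph{every} $m \in \N$. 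The first task is therefore to show each $\pP[K,m]$ is an unboundedness predicate: \cref{axiom:upward} is immediate from the definition (a witnessing tuple in $S$ is also in any superset $T$); \cref{axiom:union} is immediate since a witnessing tuple lies in $S$ or in $T$; and \cref{axiom:concatenation} requires a short combinatorial argument, namely that if $(w_1, \ldots, w_m) \in \factors[m]{L_1 \cdots L_k}$ with all $w_i \in K$, then the positions witnessing these $m$ factors inside a word of $L_1 \cdots L_k$ are distributed among the $k$ blocks, giving a split $m = m_1 + \cdots + m_k$ with the first $m_1$ of the $w_i$'s being factors of a word in $L_1$, the next $m_2$ being factors of a word in $L_2$, and so on, hence $(w_1, \ldots, w_m) \in \factors[m_1]{L_1} \times \cdots \times \factors[m_k]{L_k}$, which establishes $\pP[K,m](\factors[m_1]{L_1} \times \cdots \times \factors[m_k]{L_k})$.

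The second task is to discharge the regular-language hypothesis of \cref{thm:approximation-unboundedness}: given a regular $R \subseteq \Sigma^*$, we must decide whether $\pP[K,m](\factors[m]{R})$, i.e.\ whether there exist $w_1, \ldots, w_m \in K$ forming $m$ disjoint (consecutive) factors inside some word of $R$. This reduces to an intersection-nonemptiness check between $K$ and a regular language: take a finite automaton for $R$, guess $m+1$ of its states $q_0, q_1, \ldots, q_m$ such that the initial state reaches $q_0$, $q_m$ reaches a final state, and for each $j$ the automaton can move from $q_{j-1}$ to $q_j$ reading some word $u_j$; the constraint is then that $u_j \in K$ for every $j$. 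For a single such guess, the set of words leading from $q_{j-1}$ to $q_j$ is a regular language $R_j$ effectively computable from $R$, and since $\cC$ is a full trio it is effectively closed under intersection with regular languages, so $K \cap R_j \in \cC$ and we can test its emptiness; we accept the guess iff all $k=m$ of these intersections are nonempty. (One subtlety: we must also allow nonempty "gaps" between consecutive $u_j$'s, which is handled by working with $\Sigma^* K \Sigma^*$-type languages or, more simply, by noting the $R_j$ can be chosen to absorb arbitrary prefixes.) Ranging over the finitely many state-tuples, this decides $\pP[K,m](\factors[m]{R})$.

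Putting the pieces together: by \cref{thm:approximation-unboundedness} we compute, once and for all, a regular overapproximation $R$ of $L$ with $\pP(\factors[n]{L}) \iff \pP(\factors[n]{R})$ for \emph{every} unboundedness predicate $\pP$ and every $n$. In particular $\pP[K,m](\factors[m]{L}) \iff \pP[K,m](\factors[m]{R})$ for all $m$. Now $f_K$ is unbounded on $L$ iff it is unbounded on $R$; and for the regular language $R$ this is decidable by a pumping argument — $f_K$ is unbounded on $R$ iff the automaton for $R$ has a reachable-and-co-reachable cycle along which a full block from $K$ can be read (more precisely, one checks whether there is a loop at some state $q$ labelled by some $u \in \Sigma^* K \Sigma^*$ that itself lies on an accepting path, so the cycle can be pumped to produce arbitrarily many $K$-factors). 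If instead $f_K$ is bounded on $R$, no such pumpable configuration exists, and we can compute an explicit bound: the maximal value of $f_K$ on $R$ is then attained on a word of length bounded by (number of states of $R$) times the length of a shortest element of each relevant $K \cap R_j$, which we can compute; enumerating all words of $R$ up to this length yields the exact maximum of $f_K$ on $R$, hence on $L$.

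The step I expect to be the main obstacle is the correct formulation and verification of \cref{axiom:concatenation} for $\pP[K,m]$, together with the matching regular-language decision procedure: one has to be careful that the $m$ factors from $K$ occurring in a word $v_1 \cdots v_k$ (with $v_i$ in $L_i$) really do split cleanly across the block boundaries — a single factor $w_i \in K$ could straddle a boundary, in which case it is \emph{not} a factor of any single $v_j$. This is fine because the axiom only claims a split $m = m_1 + \cdots + m_k$ exists, and a straddling factor can simply be dropped, reducing $m$; but the bookkeeping must be done so that the resulting tuple still lies in the claimed product, which forces the witnessing factors to be chosen greedily from left to right. The analogous subtlety reappears in the regular decision procedure and in the pumping argument, where "a block from $K$ read along a cycle" must be made precise (the cycle is traversed enough times that an entire copy of some $u \in K$ fits strictly inside).
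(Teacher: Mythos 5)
Your overall architecture --- reduce the question to an unboundedness predicate, invoke \cref{thm:approximation-unboundedness} to pass to a regular overapproximation $R\supseteq L$, and then decide the question on $R$ by a product construction on its automaton --- is the same as the paper's, and your regular-language step is essentially the paper's \cref{factors-unboundedness-regular}. However, there is a genuine gap in your first task: the threshold predicates $\pP[K,m]$ (``some tuple of $m$ non-overlapping $K$-factors occurs'') are \emph{not} unboundedness predicates, because \cref{axiom:concatenation} fails for them. Take $K=\{ab\}$, $m=1$, $L_1=\{a\}$, $L_2=\{b\}$. Then $ab\in\factors{L_1L_2}$, so $\pP[K,1](\factors[1]{L_1L_2})$ holds; but any decomposition $1=m_1+m_2$ forces one of $m_1,m_2$ to be $0$, so the product in the conclusion reduces to $\factors{a}=\{\eps,a\}$ or $\factors{b}=\{\eps,b\}$, neither of which contains a word of $K$. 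Your own last paragraph identifies the culprit (a $K$-factor straddling a block boundary), but the proposed repair --- ``drop it, reducing $m$'' --- is not available: the dimension $m$ is fixed by the predicate, every entry of a witnessing tuple must lie in $K\subseteq\Sigma^+$ (so you cannot pad with $\eps$), and the straddling word need not be a factor of any single block, as the example shows. This gap is load-bearing: the transfer ``$f_K$ is unbounded on $L$ iff it is unbounded on $R$'' is exactly what you extract from \cref{thm:approximation-unboundedness} applied to these predicates, so without the axioms it is unjustified.

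The fix is the paper's choice of predicate: use the single $1$-dimensional predicate $\pP[nof]$ defined by $\pP[nof](S)$ iff $f_K$ is unbounded on $S$. For this predicate \cref{axiom:1dim:concatenation} holds by contraposition: if $f_K\le B_0$ on $L_0$ and $f_K\le B_1$ on $L_1$, then $f_K\le B_0+B_1+1$ on $L_0L_1$, the $+1$ absorbing the one possible straddling factor. Unboundedness is insensitive to losing boundedly many factors at block boundaries, whereas your exact threshold ``$\ge m$'' is not; this is precisely why the framework is phrased for predicates expressing unboundedness rather than for each finite threshold. (A side benefit: a single predicate also avoids the issue that your family would in principle require one query per $m$, which you only sidestep by switching to a direct pumping argument on $R$ at the very end.) With $\pP[nof]$ in place, the remainder of your argument --- the automaton construction over $R$, the cycle criterion for unboundedness, and the computation of an explicit bound in the bounded case --- goes through and matches the paper's proof.
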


\Cref{non-overlapping-factors:decidable} is quite
unexpected because very slight variations lead to undecidability.
%
If we ask whether $f_K$ is non-zero on a given VAS language (as
opposed to unbounded), then this is in general undecidable. Indeed,
suppose $\cC$ is a full trio for which intersection with VAS languages
is undecidable (such as languages of lossy channel systems\footnote{It
  seems to be folklore that intersection between languages of lossy
  channel systems and languages of one-dimensional VAS is undecidable
  (the additional counter can be used to ensure that no letter is
  dropped). The only reference we could find is \cite{Reinhardt2015}.}
or higher-order pushdown
languages~\cite{HagueLin2011,DBLP:conf/icalp/Zetzsche15}). Then given
a language $K\subseteq\Sigma^*$ from $\cC$, a VAS language $L$ and
some $c\notin\Sigma$, the function $f_{cKc}$ is non-zero on $cLc$ if
and only if $K\cap L\ne\emptyset$.

Furthermore, the same problem becomes undecidable in general if
instead of VAS languages, we want to decide the problem for a language
class as simple as one-counter languages (OCL). Indeed, suppose $\cC$
is a full trio for which intersection with OCL is undecidable (such as
the class of OCL). For a given $K\subseteq\Sigma^*$ from $\cC$, an OCL
$L\subseteq\Sigma^*$, and some $c\notin\Sigma$, the set $c(Lc)^*$ is
effectively an OCL and $f_{cKc}$ is unbounded on $c(Lc)^*$ if and only
if $K\cap L\ne\emptyset$.

Let us prove \cref{non-overlapping-factors:decidable}. Fix a language
$K\subseteq\Sigma^*$ from $\cC$.  Our predicate $\pP[nof]$ is
one-dimensional and is satisfied on a set $L\subseteq\Sigma^*$ if and
only if $f_K$ is unbounded on $L$. Then clearly,
$\pP[nof](\factors{L})$ if and only if $f_K$ is unbounded on $L$.  It
is immediate that \cref{axiom:1dim:upward,axiom:1dim:union} are
satisfied. Furthermore, \cref{axiom:1dim:concatenation} follows by
contraposition: If neither $\pP[nof](\factors{L_0})$ nor
$\pP[nof](\factors{L_1})$, then there are $B_0,B_1\in\N$ such that
$f_K$ is bounded by $B_i$ on $L_i$ for $i=0,1$. That implies that
$f_K$ is bounded by $B_0+B_1+1$ on $L_0L_1$. This rules out
$\pP[nof](\factors{L_0L_1})$, which establishes
\cref{axiom:1dim:concatenation}. The following uses standard arguments. 
\begin{lem}\label{factors-unboundedness-regular}
  Let $\cC$ be a full trio with decidable emptiness problem. Given a
  language $K$ from $\cC$ and a regular language $R$, it is decidable
  whether $f_K$ is unbounded on $R$. Moreover, if $f_K$ is bounded on
  $R$, we can compute an upper bound.
\end{lem}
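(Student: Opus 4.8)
The plan is to reduce the question "is $f_K$ unbounded on $R$?" to an emptiness question for a language in $\cC$, exploiting that full trios are closed under rational transductions and that $R$ is regular with a fixed finite automaton $\mathcal{A}$. First I would fix a DFA (or NFA) $\mathcal{A}=(Q,\Sigma,\delta,q_0,F_{\mathrm{acc}})$ recognizing $R$, with $|Q|=m$ states. The key combinatorial observation is a pumping-style dichotomy: either $f_K$ is bounded on $R$ by a constant that depends only on $m$ and $K$ (roughly $m$ times a bound on the length of shortest elements of $K$ is too naive — better: bounded by a number related to how many disjoint $K$-factors can occur without a state repetition forcing a loop), or else $f_K$ is unbounded, and this second case is detected by the existence of a cycle in $\mathcal{A}$ around which at least one factor from $K$ is read. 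Concretely, $f_K$ is unbounded on $R$ if and only if there exist states $p,q,r$ with $q$ reachable from $q_0$, $r$ co-reachable to some accepting state, and a cycle $q\trans{u}q$ such that $u$ (or a bounded-length extension) contains a factor from $K$; pumping this cycle then produces words in $R$ with arbitrarily many non-overlapping $K$-factors.

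The main steps, in order, are: (1) Formalize the dichotomy. For the "bounded" direction, show that if no state-repeating run of $\mathcal{A}$ reads a word containing a factor from $K$, then any accepted word has its $K$-factors confined to the "acyclic part" of the run, bounding $f_K$ by a quantity computable from $m$ (the number of simple paths through $\mathcal{A}$) — this gives the upper bound claimed in the "moreover" clause. (2) Make the detection of a "productive cycle" effective using $\cC$: build a rational transduction $\tau$ that, on input a word $w\in\Sigma^*$, marks a contiguous infix and outputs exactly that infix; then for each pair $(q,q)$ of states, consider the regular language $R_{q,q}$ of words labeling cycles at $q$, and check whether $\tau(R_{q,q})\cap K\neq\emptyset$, or more robustly whether the language of cycle-labels that \emph{contain} a $K$-factor, namely $\Sigma^*\cdot(\tau^{-1}(K)\cap\text{"is an infix"})\cdot\Sigma^*$ intersected with $R_{q,q}$, is nonempty — since $\cC$ is a full trio, the image/preimage/intersection operations keep us inside $\cC$, and we can decide emptiness. (3) Combine: iterate over the finitely many triples of states, checking reachability and co-reachability in $\mathcal{A}$ (trivial) and the cycle condition via the $\cC$-emptiness oracle; answer "unbounded" iff some triple succeeds, and otherwise output the computed bound.

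The step I expect to be the main obstacle is getting the combinatorial dichotomy exactly right, in particular pinning down the correct computable upper bound in the bounded case and ensuring the cycle condition is both \emph{necessary} and \emph{sufficient} for unboundedness — a $K$-factor might straddle the boundary between the acyclic prefix, a cycle, and the acyclic suffix, so one must be careful that "a cycle whose label contains a $K$-factor" is the right formulation rather than, say, "a cycle together with bounded context". The clean way to handle the straddling issue is to note that a single $K$-factor has bounded length only if $K$ itself is finite, which we cannot assume; instead I would argue at the level of \emph{counts}: if arbitrarily many disjoint $K$-factors occur across an infinite family of accepted words, then by pigeonhole infinitely many of those factors lie entirely within runs visiting a repeated state, which pumps; conversely pumping a productive cycle multiplies the count. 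Once the dichotomy is stated this way, the translation into full-trio operations in steps (2)–(3) is routine, using only closure under rational transductions and decidability of emptiness, exactly as the lemma's hypotheses permit.
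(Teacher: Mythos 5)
Your proposal is correct and follows essentially the same route as the paper: the paper likewise checks, for each pair of states of an automaton $\cA$ for $R$, whether some word of $K$ labels a connecting path (decidable since full trios are effectively closed under regular intersection and have decidable emptiness), and unboundedness holds exactly when such a pair lies on a reachable and co-reachable cycle. The paper packages this by contracting each such path to a fresh letter $c$ and erasing all other labels, so that unboundedness becomes $\{c\}^*\subseteq L(\cB)$ and the upper bound in the bounded case is simply the longest word of the then-finite language $L(\cB)$ --- which sidesteps the ``straddling'' worry and yields the computable bound more directly than the pigeonhole estimate (at most the number of states of $\cA$) that you correctly sketch.
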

We can deduce \cref{non-overlapping-factors:decidable} from
\cref{factors-unboundedness-regular} as follows. Using
\cref{thm:approximation-unboundedness}, we compute the language $R$.
Then, $f_K$ is unbounded on $R$ iff it is unbounded on $L$. Moreover,
an upper bound for $f_K$ on $R$ is also an upper bound for $f_K$ on
$L$ because $L\subseteq R$.

\application{Counting automata} To illustrate how these results can be
used, we formulate an extension of
\cref{non-overlapping-factors:decidable} in terms of automata that can
count. Let $\cC$ be a full trio. Intuitively, a $\cC$-counting automaton can read a word
produced by a VAS and can use machines corresponding to $\cC$ as oracles.
Just like the intersection of two languages that describe threads in a
concurrent system signals a safety
violation~\cite{BouajjaniEsparzaTouili2003,ChakiEtAl2006,long2012language},
a successful oracle call would signal a particular undesirable event.
In such a model, it would be undecidable whether any oracle
call can be successful if, for example, $\cC$ is the class of
higher-order pushdown languages. However, we show that it
\emph{is} decidable whether such an automaton can make an unbounded
number of successful oracle calls and if not, compute an upper bound.
Hence, we can decide if the number of undesirable events is bounded
and, if so, provide a bound.

\newcommand{\oppush}[1]{\mathsf{push}(#1)}
\newcommand{\opcheck}[2]{\mathsf{check}(#1,#2)}
\newcommand{\Op}{\Omega}


A \emph{$\cC$-counting automaton} is a tuple
$\cA=(Q,\Sigma,\Gamma,C,q_0,E,Q_f)$, where $Q$ is a finite set of
\emph{states}, $\Sigma$ is its \emph{input alphabet}, $\Gamma$ is its
\emph{(oracle) tape alphabet}, $C$ is a finite set of \emph{counters},
$q_0\subseteq Q$ is its \emph{initial state}, $Q_f\subseteq Q$ is its
set of \emph{final states}, and $E\subseteq Q\times\Sigma^*\times
(\Op\cup\{\varepsilon\})\times Q$ is a finite set of \emph{edges},
where $\Op$ is a set of \emph{operations} of the following
form. First, we have an operation $\oppush{a}$ for each
$a\in\Gamma$, which appends $a$ to the oracle tape. Moreover,
we have $\opcheck{K}{c}$ for each $K\subseteq\Gamma^*$ from $\cC$ and
each $c\in C$, which first checks whether the current tape content
belongs to $K$ and if so, increments the counter $c$. After the oracle query,
it empties the oracle tape, regardless of whether the oracle anwsers
positively or negatively.

\newcommand{\autstep}[1]{\xrightarrow{#1}}
\newcommand{\reset}[2]{#1[#2\mapsto 0]}
A \emph{configuration} of $\cA$ is a triple $(q,u,\mu)$, where $q\in
Q$ is the current state, $u\in\Gamma^*$ is the oracle tape content, and $\mu\in\N^C$ describes
the counter values.  For a label
$x\in\Sigma\cup\{\varepsilon\}$, and configurations $(q,u,\mu),
(q',u',\mu')$, we write $(q,u,\mu)\autstep{x}(q',u',\mu')$ if $(q',u',\mu')$ results from
$(q,u,\mu)$ as described above.
%
%
%
In the general case $w\in\Sigma^*$,  $(q,u,\mu)\autstep{w}(q',u',\mu')$ has the obvious meaning.
%
$\cA$ defines a function $\Sigma^*\to\Nom$:
\[ \cA(w)=\sup\left\{\left.\inf_{c\in C} \mu(c) ~\right|~ \mu\in\N^C,~(q_0,\varepsilon,0)\xrightarrow{w} (q,u,\mu) ~\text{for some $q\in Q_f,~u\in\Gamma^*$}\right\}. \]
%
Hence, $\cA$ is unbounded on $L$ if for every $k\in\N$, there is a
$w\in L$ and a run of $\cA$ on $w$ in which for each $c\in C$, at
least $k$ of the oracle queries for $c$ are successful. The following
can be shown similarly to \cref{non-overlapping-factors:decidable},
but using a multi-dimensional unboundedness predicate.
\begin{thm}\label{counting-automata:decidable}
  Let $\cC$ be a full trio with decidable emptiness.  Given a VAS
  language $L$ and a $\cC$-counting automaton $\cA$, it is decidable
  whether $\cA$ is unbounded on $L$. Moreover, if $\cA$ is bounded on
  $L$, then one can compute an upper bound $B\in\N$ for $\cA$ on $L$.
\end{thm}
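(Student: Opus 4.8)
\noindent\emph{Proof plan.} The plan is to mimic the proof of \cref{non-overlapping-factors:decidable}: the only essential difference is that $\cA$'s finite control and its oracle tape are finite-state, whereas its oracle queries are not. I would therefore first rewrite $L$ into a VAS language in which the outcomes of the queries are exposed as factors, and then apply \cref{thm:approximation-unboundedness} to a predicate built from a $\min$ over the counters of $\cA$.

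First I would build the rewriting. A check edge $e$ of $\cA$ can always be taken --- it only conditionally increments a counter and then clears the tape --- so a valid run of $\cA$ is just a path through its finite control, and along such a path the check edges $e_1,\dots,e_\ell$ (in order) and the tape contents $t_1,\dots,t_\ell$ accumulated between consecutive checks are determined. Hence there is a rational transduction $\rho$ that, on an input $w$, simulates a control path ending in $Q_f$, copies each pushed symbol to the output, and wraps the $r$-th tape content in a pair of brackets labelled by $e_r$, producing $[_{e_1} t_1 ]_{e_1}\cdots[_{e_\ell} t_\ell ]_{e_\ell}$ over $\Gamma'=\Gamma\cup\{\,[_e,]_e\mid e\text{ a check edge}\,\}$ (after each check the transducer guesses whether a further check will occur). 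As VAS languages form a full trio, $L'=\rho(L)$ is a VAS language. For a counter $c_j$ let $N_j$ be the union, over check edges $e$ with operation $\opcheck{K_e}{c_j}$, of the languages $[_e K_e ]_e$; each $\Gamma'^*\,[_e K_e ]_e\,\Gamma'^*$ is a rational-transduction image of $K_e\in\cC$ and hence lies in $\cC$. Writing $|u|_N$ for the largest number of non-overlapping factors from $N$ occurring in $u$ in order, the bracket structure of $u\in L'$ forces every factor from $N_j$ to coincide with one of the blocks $[_{e_r} t_r ]_{e_r}$ with $e_r$ targeting $c_j$ and $t_r\in K_{e_r}$, i.e. with a successful $c_j$-query of the run behind $u$ (distinct blocks being disjoint); thus $|u|_{N_j}$ is exactly the number of successful $c_j$-queries. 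Consequently $\cA$ is unbounded on $L$ iff $\gamma\colon u\mapsto\min_j|u|_{N_j}$ is unbounded on $L'$, and any upper bound for $\gamma$ on $L'$ bounds $\cA$ on $L$.

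Next, let $\pP[cnt]$ be the one-dimensional predicate with $\pP[cnt](K)$ iff $\gamma$ is unbounded on $K$ (one could also use an $m$-dimensional predicate here). Since $|u|_N$, and hence $\gamma$, is monotone for the factor ordering, $\gamma$ is unbounded on $\factors{L'}$ iff it is unbounded on $L'$, so $\pP[cnt](\factors{L'})$ holds iff $\cA$ is unbounded on $L$. Axioms \cref{axiom:1dim:upward,axiom:1dim:union} are immediate, and \cref{axiom:1dim:concatenation} is verified exactly as for $\pP[nof]$: a factor of a word of $L_0L_1$ splits as $u_0u_1$ with $u_i$ a factor of a word of $L_i$, and at most one factor from a given $N_j$ can straddle the cut, so bounds $B_0,B_1$ for $\gamma$ on $\factors{L_0},\factors{L_1}$ yield the bound $B_0+B_1+1$ on $\factors{L_0L_1}$. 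Thus $\pP[cnt]$ is an unboundedness predicate, and \cref{thm:approximation-unboundedness} produces a regular $R'\supseteq L'$ with $\pP[cnt](\factors{L'})\Leftrightarrow\pP[cnt](\factors{R'})$, reducing everything to deciding whether $\gamma$ is unbounded on a regular language $R'$ and, if not, bounding it.

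The hard part is this last step --- the $\min$-analogue of \cref{factors-unboundedness-regular}. Fix a DFA $A'$ for $R'$ with state set $Q'$, and for $p,q\in Q'$ let $L_{p,q}$ be the regular language of labels of $A'$-paths from $p$ to $q$. I claim $\gamma$ is unbounded on $R'$ iff there are a permutation $\tau$ of $\{1,\dots,m\}$ and states $q_1,\dots,q_m\in Q'$ with $q_1$ reachable from the initial state, $q_{i+1}$ reachable from $q_i$, a final state reachable from $q_m$, and $L_{q_i,q_i}\cap\Gamma'^*N_{\tau(i)}\Gamma'^*\neq\emptyset$ for all $i$. For the forward implication one pumps the witnessing loops $v_i$: then $x_0 v_1^k\cdots v_m^k x_1\in R'$ has at least $k$ disjoint factors from each $N_j$. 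For the converse, take $u\in R'$ with $\gamma(u)$ above a threshold $B_0$ computable from $|Q'|$ and $m$, and peel off prefixes greedily: repeatedly choose, among the untreated counters, one, say $c_j$, for which the $\lceil(\text{current count})/(\#\text{untreated})\rceil$-th of the disjoint in-order factors from $N_j$ ends earliest; this confines that many $N_j$-factors to a prefix while leaving every other untreated counter almost all of its factors in the complementary suffix, and after $m$ rounds $u$ is split into consecutive blocks, the $i$-th carrying more than $|Q'|$ disjoint $N_{\tau(i)}$-factors, so a control state repeated inside that block supplies $q_i$ and $v_i$. The criterion is decidable: there are finitely many $\tau$ and tuples $(q_1,\dots,q_m)$, reachability in $A'$ is decidable, and since $N_{\tau(i)}$ is a finite union of rational-transduction images of $\cC$-languages and $\cC$ is a full trio with decidable emptiness, $L_{q_i,q_i}\cap\Gamma'^*N_{\tau(i)}\Gamma'^*\neq\emptyset$ is decidable. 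Finally, if the criterion fails then $\gamma\le B_0$ on $R'$, hence on $L'\subseteq R'$, and by the reduction above $B_0$ is the required computable upper bound for $\cA$ on $L$. Everything before this lemma is a routine product construction together with the observation that deferring the queries does not change the counts.
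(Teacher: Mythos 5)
Your overall architecture matches the paper's: expose the successful queries as marked factors via a rational transduction into a new VAS language $L'$, reduce unboundedness of $\cA$ to unboundedness of a factor-counting function on $L'$, and then invoke \cref{thm:approximation-unboundedness}. The gap is in the middle step: the one-dimensional predicate $\pP[cnt]$, defined by unboundedness of $\gamma(u)=\min_j|u|_{N_j}$, does \emph{not} satisfy \cref{axiom:1dim:concatenation}, and your verification of that axiom is wrong. The subadditivity $|u_0u_1|_{N_j}\le|u_0|_{N_j}+|u_1|_{N_j}+1$ holds for each fixed $j$, but it does not pass to the minimum over $j$: boundedness of $\gamma$ on $\factors{L_0}$ only gives, for each factor $u_0$, \emph{some} index $j_0$ with $|u_0|_{N_{j_0}}\le B_0$, and this index need not agree with the one witnessing boundedness on the $L_1$-side. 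Concretely, with two counters let $L_0$ consist of words carrying arbitrarily many $N_1$-blocks and no $N_2$-blocks and let $L_1$ be symmetric; then $\gamma\equiv 0$ on $\factors{L_0}$ and on $\factors{L_1}$, yet $\gamma$ is unbounded on $\factors{L_0L_1}$. This is exactly the phenomenon the paper flags when motivating multidimensional predicates (the predicate $a^*b^*\subseteq\dcl{L}$ failing (iii$^*$)), and since \cref{thm:approximation-unboundedness} needs the axioms for arbitrary languages (they get applied to the regular pieces coming out of the KLMST decomposition), the failure cannot be excused by the special bracketed shape of $L'$.

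The fix is the parenthetical you set aside: use an $m$-dimensional predicate, one coordinate per counter. The paper's proof does precisely this (via its \cref{non-overlapping-factors:multi:decidable}): it attaches to each word $w$ the set $\Delta(w)\subseteq\N^m$ of achievable count vectors, lets $\pP(S)$ say that the Minkowski sums $\sum_{i}\Delta(w_i)$ over tuples $(w_1,\dots,w_m)\in S$ contain vectors with all coordinates arbitrarily large, and verifies \cref{axiom:concatenation} by a pigeonhole argument on ``profiles'' $[1,m]\to[1,k]$ recording, for each counter, which factor of the concatenation supplies its large count. Your transduction and your final regular-language step (the permutation-of-loops criterion, whose converse you only sketch) are essentially sound and correspond to the paper's marking construction and its Parikh-image/semilinearity argument, but they only become usable once the predicate itself is repaired.
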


\application{Factor inclusion} As a last example, we show how our
results can be used to decide inclusion problems. Specifically, given
a VAS language $L\subseteq\Sigma^*$, it is decidable whether
$\Sigma^*\subseteq\factors{L}$. In fact, we show a more general result:





\begin{thm}\label{factor-universality:ext:decidable}
  If $\cC$ is a full trio with decidable emptiness problem, then given
  a VAS language $L$ and a language $K$ from $\cC$, it is decidable
  whether $K^*\subseteq\factors{L}$.
\end{thm}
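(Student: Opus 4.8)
The plan is to reduce the statement to \cref{thm:approximation-unboundedness} by exhibiting an appropriate unboundedness predicate, exactly in the style of the earlier applications. The natural candidate is a $1$-dimensional predicate $\pP[fu]$ that holds for a set $M\subseteq\Sigma^*$ if and only if $K^*\subseteq M$. Then $\pP[fu](\factors{L})$ holds iff $K^*\subseteq\factors{L}$, so by \cref{thm:approximation-unboundedness} it suffices to verify (a) that $\pP[fu]$ satisfies the three axioms \cref{axiom:1dim:upward,axiom:1dim:union,axiom:1dim:concatenation}, and (b) that $\pP[fu](\factors{R})$ is decidable for regular $R$. Part (b) is the easy half: given $R$ we can compute a finite automaton for $\factors{R}$ (close $R$ under adding $\Sigma^*$ on both sides), and we can build a finite automaton for $K^*$ whenever $K\in\cC$ and $\cC$ has decidable emptiness --- actually we only need to test $K^*\subseteq\factors{R}$, i.e.\ emptiness of $K^*\cap\overline{\factors{R}}$; since $\overline{\factors{R}}$ is regular and $\cC$ is a full trio (hence effectively closed under intersection with regular sets), this is an emptiness test for a language in $\cC$, which is decidable by hypothesis. \Cref{axiom:1dim:upward} is immediate from the definition (a superset of a set containing $K^*$ still contains $K^*$).

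The content is in \cref{axiom:1dim:union,axiom:1dim:concatenation}, and here I expect a combinatorial argument analogous to \cref{lemma:sup}. For \cref{axiom:1dim:union}: suppose $K^*\subseteq M_1\cup M_2$. If $K=\{\eps\}$ the claim is trivial, so assume $K$ contains a nonempty word $v$; then $K^*$ contains $v^m$ for all $m$, and more generally is an infinite set closed under concatenation. The key point is that $K^*\cdot K^*=K^*$: if $K^*\not\subseteq M_1$ and $K^*\not\subseteq M_2$, pick $w_1\in K^*\setminus M_1$ and $w_2\in K^*\setminus M_2$; then $w_1w_2\in K^*\subseteq M_1\cup M_2$, but $w_1w_2\notin M_1$ would require... hmm, this does not immediately work since membership of $w_1w_2$ in $M_1$ is not controlled by membership of $w_1$. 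The correct route is instead: the predicate ``$K^*\subseteq M$'' is not obviously closed under the union axiom in this naive form, so I would restructure. The honest fix is to observe that \cref{axiom:1dim:union} for this predicate needs an infinitary/pumping argument: if $K^*\subseteq\factors{L_1}\cup\factors{L_2}$, then for every $u\in K^*$ we have $u\in\factors{L_i}$ for some $i$; what we want is a single $i$ working for all $u$. This follows because $\factors{L_i}$ is \emph{factor-closed} and $K^*$ is \emph{directed} under the factor ordering in the sense that any two elements $u_1,u_2\in K^*$ are factors of a common element $u_1u_2\in K^*$. Hence $\{u\in K^*: u\in\factors{L_1}\}$ and its counterpart for $L_2$ are two factor-closed-in-$K^*$ sets covering $K^*$; if neither is all of $K^*$, take witnesses $u_1,u_2$ outside each, note $u_1u_2\in K^*$ lies in $\factors{L_1}\cup\factors{L_2}$, say in $\factors{L_1}$, and then $u_1\subword u_1u_2$ forces $u_1\in\factors{L_1}$, a contradiction. (The general \cref{axiom:1dim:union} for arbitrary $S,T$, not just factor languages, is what the axiom literally demands, so I would either phrase $\pP[fu]$ directly on factor-closed sets or, cleaner, replace $\pP[fu]$ by the equivalent formulation ``$S$ contains $u$ for every $u$ such that $u\subword u'$ for some $u'\in K^*$'' --- but since $K^*$ is already factor-generated-friendly this coincides; I will pick the cleanest equivalent formulation so the axioms hold literally.)

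For \cref{axiom:1dim:concatenation}: if $K^*\subseteq\factors{L_0L_1}$, I want $K^*\subseteq\factors{L_0}$ or $K^*\subseteq\factors{L_1}$. Again using directedness of $K^*$ under $\subword$: suppose not, so there are $u_0\in K^*\setminus\factors{L_0}$ and $u_1\in K^*\setminus\factors{L_1}$. Then $u_0u_1\in K^*\subseteq\factors{L_0L_1}$, so $u_0u_1$ occurs as a factor of some $x_0x_1$ with $x_i\in L_i$; the occurrence of $u_0u_1$ splits at the boundary between $x_0$ and $x_1$ into a prefix that is a factor of $x_0\in L_0$ and a suffix that is a factor of $x_1\in L_1$. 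Whichever of $u_0,u_1$ lies entirely on the $L_0$-side is then a factor of some word of $L_0$, contradicting the choice of $u_0$ (or the other case with $u_1$ and $L_1$). More carefully: write $u_0u_1 = pq$ where $p\subword x_0$ and $q\subword x_1$; then either $|p|\ge|u_0|$, whence $u_0$ is a prefix of $p$ and so $u_0\subword x_0\in L_0$, contradiction; or $|p|<|u_0|$, whence $u_1$ is a suffix of $q$ and so $u_1\subword x_1\in L_1$, contradiction. The main obstacle, then, is not a single hard lemma but getting the formulation of $\pP[fu]$ exactly right so that all three axioms hold for \emph{all} subsets of $\Sigma^*$ (as the definition requires), not merely factor-closed ones; once the predicate is stated as ``$S\supseteq\{u : \exists u'\in K^*,\ u\subword u'\}$'' (note this downward-closed-under-$\subword$ set equals $\dcl{K^*}$, which is regular and computable by Higman), the union and concatenation axioms become the elementary factor-surgery arguments above, and decidability for regular $R$ is the emptiness test $\dcl{K^*}\cap\overline{\factors{R}}\overset{?}{=}\emptyset$ --- but since $\dcl{K^*}$ is regular this last check does not even need $\cC$; the genuine use of the full-trio/emptiness hypothesis on $\cC$ is only in the equivalence $K^*\subseteq\factors{L}\iff\dcl{K^*}\subseteq\factors{L}$... which actually fails in general, so I retract the $\dcl{K^*}$ reformulation and keep $\pP[fu](M)\equiv K^*\subseteq M$, accepting that verifying \cref{axiom:1dim:union} in full generality requires the directedness argument restricted appropriately, exactly as \cref{lemma:sup} handles the analogous subtlety; I will follow that template. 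The one real subtlety to nail down is thus \cref{axiom:1dim:union}, and the resolution is the observation that $K^*$ is a sub-semigroup, so any finite subset of it has an upper bound (under $\subword$) inside $K^*$, which converts the ``pointwise'' covering into a ``global'' one; I expect this to be the part a careful referee would scrutinize.
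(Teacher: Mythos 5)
Your combinatorial arguments are exactly the right ones --- the directedness of $K^*$ (any $u,v\in K^*$ are both factors of $uv\in K^*$) for \cref{axiom:1dim:union}, and the boundary-splitting argument for \cref{axiom:1dim:concatenation} --- and they coincide with the paper's proof. But two gaps remain. First, you correctly observe that the predicate $\pP[fu](M)\equiv K^*\subseteq M$ fails \cref{axiom:1dim:union} for arbitrary sets (take $K=\{a\}$, $S$ the even powers of $a$, $T$ the odd ones), yet after circling through and retracting the $\dcl{K^*}$ reformulation you end up keeping exactly that predicate and ``accepting'' the problem. The fix you are looking for is simply to define $\pP[fu](M)$ as $K^*\subseteq\factors{M}$. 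This does not change the target, since $\factors{\factors{L}}=\factors{L}$, and it makes all three axioms hold literally for all subsets: $\factors{S\cup T}=\factors{S}\cup\factors{T}$, and $\factors{S}$, $\factors{T}$ are factor-closed, so your directedness argument applies verbatim. This is what the paper does.

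Second, your decidability argument for regular $R$ is wrong as stated: you reduce to emptiness of $K^*\cap\overline{\factors{R}}$ and call this ``an emptiness test for a language in $\cC$'', but full trios are not closed under Kleene star, so $K^*$ (hence $K^*\cap\overline{\factors{R}}$) need not belong to $\cC$; and your fallback claim that one can ``build a finite automaton for $K^*$'' is false since $K^*$ need not be regular. The statement you want is still true, but the proof is the saturation construction of \cref{factor-universality-regular}: take an automaton $\cA$ for $\Sigma^*\setminus\factors{R}$, and for each pair of states $p,q$ decide whether some word of $K$ labels a path from $p$ to $q$ (this \emph{is} an emptiness test for a language in $\cC$, namely $K$ intersected with a regular language); add a fresh edge for each such pair, erase all original edges, and test emptiness of the resulting automaton. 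With these two repairs your proof matches the paper's.
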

Here, $\Sigma^*\subseteq\factors{L}$ is the special case where $K=\Sigma$.
Recall that is is undecidable whether
$L=\Sigma^*$ for VAS languages and for
one-counter languages (OCL) (e.g.~\cite[Lemma
6.1]{DBLP:journals/dmtcs/CzerwinskiMRZZ17}).

Similar to \cref{non-overlapping-factors:decidable}, deciding whether
$\Sigma^*\subseteq\factors{L}$ is already undecidable for OCL
$L$: For a given OCL $L\subseteq\Sigma^*$, pick a letter
$c\notin\Sigma$ and note that
$L'=c(Lc)^*\subseteq(\Sigma\cup\{c\})^*$ is effectively an OCL and
$(\Sigma\cup\{c\})^*\subseteq\factors{L'}$ if and only if
$L=\Sigma^*$.
Also, under the assumptions of the
\lcnamecref{factor-universality:ext:decidable}, it is undecidable
whether $K\subseteq\factors{L}$: If $L\subseteq\Sigma^*$ and
$c\notin\Sigma$, then $c\Sigma^*c\subseteq\factors{cLc}$ if and only
if $L=\Sigma^*$ (every full trio contains the regular set
$c\Sigma^*c$).

Let us see how \cref{factor-universality:ext:decidable} follows from
\cref{thm:approximation-unboundedness}.  Fix a language $K$ from $\cC$.  We use the
$1$-dim. predicate $\pP[fu]$, which is satisfied on a set
$L\subseteq\Sigma^*$ if and only if $K^*\subseteq\factors{L}$. Of
course, \cref{axiom:upward} holds by definition.
\Cref{axiom:concatenation} follows by contraposition: Suppose that
$K^*\subseteq\factors{L_1L_2}$ and $K^*\not\subseteq\factors{L_1}$
with some $u\in K^*\setminus \factors{L_1}$.  Let $v\in K^*$ be
arbitrary. Then, since $K^*\subseteq\factors{L_1L_2}$, we have
$uv\in\factors{L_1L_2}$. This means, there are $x,y\in\Sigma^*$ with
$xuvy\in L_1L_2$.  Hence, we have $xuvy=w_1w_2$ for some $w_i\in L_i$
for $i=1,2$. Then $|w_1|<|xu|$, because otherwise $u$ would belong to
$\factors{L_1}$. Therefore, $v$ is a factor of $w_2$ and thus
$v\in\factors{L_2}$. Hence, $K^*\subseteq\factors{L_2}$. Of course, a
similar argument works if $K^*\subseteq\factors{L_1L_2}$ and
$K^*\not\subseteq\factors{L_2}$. This proves
\cref{axiom:concatenation}. \Cref{axiom:union} can be shown the same
way. Thus, by \cref{thm:approximation-unboundedness}, it
  suffices to decide whether $K^*\subseteq\factors{R}$ for regular
  $R$, which follows from $\cC$ being a full trio and having decidable
  emptiness (see \cref{factor-universality-regular}).



\section{Proof of the main result}\label{sec:proof}
We prove our decidability result using the KLMST decomposition. More
specifically, we show a consequence that might be interesting in its
own right.

\begin{thm}\label{thm:approximation}
Given a VAS language $L\subseteq\Sigma^*$, one can compute $m, k\in\N$ and regular languages 
$R_{i,j}\subseteq\Sigma^*$, for $i \in [1,m]$, $j\in [1,k]$ so that
\begin{align}
L\subseteq\bigcup_{i=1}^m R_{i,1}\cdots R_{i,k} && \text{and} && R_{i,1}\times \cdots \times R_{i,k}\subseteq \factors[k]{L}~\text{for every $i\in[1,m]$}.\label{approx-rel}
\end{align}
\end{thm}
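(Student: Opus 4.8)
The plan is to run the KLMST decomposition (the Kosaraju--Lambert--Mayr--Sacerdote--Tenney construction underlying the reachability algorithm) on the input VAS $V$, and to extract from it a finite family of ``covering'' runs that are flexible enough to pump independently in several places. Recall that the KLMST procedure produces a finite set of marked graph-transition sequences (MGTS), each of which is a structured skeleton of runs: an alternation of ``rigid'' parts and strongly connected components in which certain counters can be pumped arbitrarily high and then back down. The union of the runs captured by these MGTS is exactly $L(V)$, so $L\subseteq\bigcup_i R_{i,1}\cdots R_{i,k}$ will come by reading off the labels of each MGTS as a bounded-length concatenation of pieces, where each piece is the label language of one segment of the skeleton; since each segment is essentially a flat/pumpable regular gadget, each such label language is effectively regular.

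The crux is the second condition, $R_{i,1}\times\cdots\times R_{i,k}\subseteq\factors[k]{L}$: this says that we may \emph{simultaneously} realize an arbitrary factor from each $R_{i,j}$ inside a single word of $L$. First I would show that after the KLMST normalization, each MGTS has the ``perfectness''/pumpability property guaranteeing that every component can be traversed with arbitrarily large counter slack, and crucially that these pumping moves in different components are \emph{independent}: one can first raise all the relevant counters high enough (a common preamble), then in each component $j$ run through any chosen word of $R_{i,j}$ while only decrementing from the built-up slack, and finally drain back down to the target. Formally, given $(u_1,\dots,u_k)\in R_{i,1}\times\cdots\times R_{i,k}$, each $u_j$ is a factor of some iterate of the $j$-th component's label language; I would bound the counter cost of each $u_j$, pump each loop enough times in a preliminary pass so that the accumulated slack dominates all these costs, and then splice the $u_j$'s in. The resulting word lies in $\factors[k]{L}$ because it is a factor of an honest run of $V$ from source to target. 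I would take $k$ to be the maximal number of segments over all MGTS (padding shorter ones with $R_{i,j}=\{\eps\}$), and $m$ the number of MGTS.

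I would structure the argument so that \cref{thm:approximation-unboundedness} follows immediately: given $\pP$ and the decomposition, $\pP(\factors[k]{L})$ holds iff $\pP(\factors[k]{R})$ for $R=\bigcup_i R_{i,1}\cdots R_{i,k}$. Indeed $L\subseteq R$ gives one direction via \cref{axiom:upward} applied after $\factors[n]{L}\subseteq\factors[n]{R}$; for the converse, from $\pP(\factors[n]{R})=\pP(\factors[n]{\bigcup_i R_{i,1}\cdots R_{i,k}})$, axiom \cref{axiom:union} picks out one $i$ with $\pP(\factors[n]{R_{i,1}\cdots R_{i,k}})$, then \cref{axiom:concatenation} yields $n=n_1+\cdots+n_k$ with $\pP(\factors[n_1]{R_{i,1}}\times\cdots\times\factors[n_k]{R_{i,k}})$; since $R_{i,1}\times\cdots\times R_{i,k}\subseteq\factors[k]{L}$ one checks $\factors[n_1]{R_{i,1}}\times\cdots\times\factors[n_k]{R_{i,k}}\subseteq\factors[n]{L}$ (a factor of a factor is a factor, coordinatewise), and \cref{axiom:upward} finishes it.

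The main obstacle I anticipate is establishing the \emph{independent simultaneous pumpability} across all $k$ segments rigorously from the KLMST normal form — i.e.\ that the pumping budgets of the different components do not interfere, and that reading an arbitrary factor of a component's label language (not merely a full power of its loop) is compatible with the rigid interface markings on both sides of that component. This is exactly the place where one must reach into the technical guarantees of Lambert's construction (the characteristic system / perfect MGTS conditions) rather than treat it as a black box; the rest — reading off the $R_{i,j}$ as regular languages, the padding to uniform length $k$, and the deduction of \cref{thm:approximation-unboundedness} from the axioms — is routine.
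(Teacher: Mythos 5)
Your proposal follows essentially the same route as the paper: decompose via Lambert's perfect MGTS, take the $R_{i,j}$ to be the label languages of the precovering graphs (interspersed with the singleton transitions, padded with $\{\eps\}$), and establish the simultaneous-factor inclusion by pumping up before splicing in the chosen words. The "main obstacle" you flag is resolved in the paper exactly as you sketch it --- a short lemma showing that prepending enough iterations of a covering sequence turns any $v\in L(C)$ into a suffix of a covering sequence, after which Lambert's iteration lemma (applied per component, rather than via a single global preamble) stitches the components into an honest run.
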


We first show how to derive \cref{thm:approximation-unboundedness}
from \cref{thm:approximation} and then proceed with the proof of
\cref{thm:approximation}, as it is much more technically complicated.

\subparagraph*{Proof of \cref{thm:approximation-unboundedness}}
Suppose \cref{thm:approximation} holds. Then, given a VAS language
$L$, we compute $m, k\in\N$ and the regular languages $R_{i,j}$ for
$i\in[1, m], j\in[1, k]$. We choose $R=\bigcup_{i=1}^m R_{i,1}\cdots
R_{i,k}$.  Then we have $L\subseteq R$. Let us show that
$\pP(\factors[n]{L})$ if and only if $\pP(\factors[n]{R})$. If
$\pP(\factors[n]{L})$, then clearly $\pP(\factors[n]{R})$, because
$L\subseteq R$ implies $\factors[n]{L}\subseteq \factors[n]{R}$ and by
\cref{axiom:upward}, this implies $\pP(\factors[n]{R})$. Conversely,
suppose $\pP(\factors[n]{R})$. Then by \cref{axiom:union}, there is an
$i\in[1,m]$ such that $\pP(\factors[n]{R_i})$, where
$R_i=R_{i,1}\cdots R_{i,k}$. According to \cref{axiom:concatenation},
we can write $n=n_1+\cdots+n_k$ such that $\pP$ holds for
$S:=\factors[n_1]{R_{i,1}}\times \cdots\times\factors[n_k]{R_{i,k}}$.
Note that by the choice of $R_{i,j}$, we have $R_{i,1}\times\cdots\times R_{i,k}\subseteq
\factors[k]{L}$ and therefore $S\subseteq
\factors[n]{L}$. This implies $\pP(\factors[n]{L})$ by \cref{axiom:upward}.

\subparagraph*{Proof of \cref{thm:approximation}}

The remainder of this section is devoted to the proof of \cref{thm:approximation}.
Like the method for computing downward closures by Habermehl, Meyer, and
Wimmel~\cite{HabermehlMeyerWimmel2010}, the construction of the sets $R_{i,j}$ is
based on Lambert's proof~\cite{DBLP:journals/tcs/Lambert92} of the decidability of
the reachability problem for Petri nets. In order to be compatible with Lambert's
exposition, we phrase our proof in terms of Petri nets instead of vector addition
systems.

A \emph{Petri net} $N = (P, T, \Pre, \Post)$ consists of a finite set $P$ of \emph{places},
a finite set $T$ of \emph{transitions} and two mappings $\Pre, \Post\colon T \to \N^P$.
Configurations of Petri net are elements of $\N^P$, called \emph{markings}.
For two markings $M, M'$ we say that $M'$ \emph{dominates} $M$, denoted $M\leq M'$, if for every place $p \in P$,
we have $M[p] \leq M'[p]$. The \emph{effect} of a transition $t \in T$
is $\Post(t) - \Pre(t) \in \Z^P$, denoted $\eff(t)$. If a marking $M$ dominates $\Pre(t)$ for a transition $t \in T$ then
$t$ is \emph{fireable} in $M$ and the result of firing $t$ in marking $M$ is $M' = M + \eff(t)$,
we write $M \trans{t} M'$. We extend notions of fireability and firing naturally to sequences
of transitions, we also write $M \trans{w} M'$ for $w \in T^*$. The \emph{effect} of $w \in T^*$
is sum of the effects of its letters, $\eff(w) = \sum_{i = 1}^{|w|} \eff(w[i])$.

For a Petri net $N = (P, T, \Pre, \Post)$ and markings $M_0,M_1$, we define the language 
$L(N, M_0, M_1) = \{w \in T^* \mid M_0 \trans{w} M_1\}$.
Hence, $L(N,M_0,M_1)$ is the set of transition sequences leading from
$M_0$ to $M_1$.  Moreover, let $L(N, M_0) = \bigcup_{M \in \N^P} L(N,
M_0, M)$, i.e. the set of all the transition sequences fireable in $M_0$.  A
\emph{labeled Petri net} is a Petri net $N=(P,T,\Pre,\Post)$ together
with an \emph{initial marking} $M_I$, a \emph{final marking} $M_F$,
and a \emph{labeling}, i.e.  a homomorphism $T^*\to\Sigma^*$.
The language \emph{recognized by} the labeled Petri net is then defined as
$L_h(N, M_I, M_F) = h(L(N, M_I, M_F))$.

It is folklore (and easy to see) that a language is a VAS language if and only
if it is recognized by a labeled Petri net (and the translation is
effective). Thus, it suffices to show \cref{thm:approximation} for languages of
the form $L=h(L(N,M_I,M_F))$.  Moreover, it is already enough to prove
\cref{thm:approximation} for languages of the form $L(N,M_I,M_F)$. Indeed,
observe that if we have constructed $R_{i,j}$ so that \cref{approx-rel} is
satisfied, then with $S_{i,j}=h(R_{i,j})$, we have
$h(L)\subseteq\bigcup_{i=1}^m S_{i,1}\cdots S_{i,k}$
and
$S_{i,1}\times\cdots \times S_{i,k}\subseteq \factors[k]{h(L)}$ for every $i\in[1,m]$.
Thus from now on, we assume $L=L(N,M_I,M_F)$ for a fixed Petri net $N=(P,T,\Pre,\Post)$.

%

\subparagraph*{The KLMST decomposition}
Lambert's decision procedure~\cite{DBLP:journals/tcs/Lambert92} is a refinement of
the previous ones by Mayr~\cite{DBLP:conf/stoc/Mayr81} and
Kosaraju~\cite{DBLP:conf/stoc/Kosaraju82}. Later, Leroux and
Schmitz~\cite{leroux2015reachability}  recast it again as an algorithm using WQO
ideals and dubbed the procedure \emph{KLMST
  decomposition} after its
inventors~\cite{DBLP:conf/stoc/Kosaraju82,DBLP:journals/tcs/Lambert92,DBLP:conf/stoc/Mayr81,sacerdote1977decidability}.

The idea is the following. We disregard for a moment that a transition sequence
has to keep all intermediate markings non-negative and only look for a sequence
that may go negative on the way.  It is standard technique to express the
existence of such a sequence as a linear equation system $Ax=b$. As expected,
solvability of this system is not sufficient for the existence of an actual
run. However, if we are in the situation that we can find (a)~runs that pump up
all coordinates arbitrarily high and also (b)~counterpart runs that remove those
excess tokens again, then solvability of the equation system is also sufficient:
We first increase all coordinates high enough, then we execute our
positivity-ignoring sequence, and then we pump down again. Roughly speaking, the
achievement of the KLMST decomposition is to put us in the latter situation, which
we informally call \emph{perfect circumstances}.

To this end, one uses a data structure, in Lambert's version called \emph{marked
  graph-transition sequence (MGTS)}, which restricts the possible runs of the
Petri net.  If the MGTS satisfies a condition that realizes the above perfect
circumstances, then it is called \emph{perfect}. Unsurprisingly, not every MGTS is
perfect.  However, part of the procedure is a decomposition of an imperfect MGTS
into finitely many MGTS that are less imperfect. Moreover, this
decomposition terminates in a finite set of perfect MGTS.  Thus, applied to an
MGTS whose restriction is merely to start in $M_I$ and end in $M_F$, then the
decomposition yields finitely many perfect MGTS $\Ncal_1,\ldots,\Ncal_n$ such that
the runs from $M_I$ to $M_F$ are precisely those conforming to at least one
of the MGTS. Moreover, checking whether $\Ncal_i$ admits a run amounts to
solving a linear equation system.


\subparagraph*{Basic notions} Let us introduce some notions used in Lambert's proof.
We extend the set of configurations $\N^d$ into $\Nom^d$, where $\Nom = \N \cup \{\omega\}$ for $\omega$ being
the first infinite ordinal number and representing the infinity. We extend the notion of transition firing into $\Nom^d$ naturally, by
defining $\omega - k = \omega = \omega + k$ for every $k \in \N$. For $u, v \in \Nom^d$ we write $u \leq_\omega v$ if
$u[i] = v[i]$ or $v[i] = \omega$. Intuitively reaching a configuration with $\omega$ at some places
means that it is possible to reach configurations with values $\omega$ substituted by arbitrarily high values.

A key notion in~\cite{DBLP:journals/tcs/Lambert92} is that of MGTS, which formulate restrictions on paths in
Petri nets. A \emph{marked graph-transition sequence (MGTS)} for our Petri net
$N=(P,T,\Pre,\Post)$ is a finite sequence
$
C_0, t_1, C_1 \ldots C_{n-1}, t_n, C_n,
$
where $t_i$ are transitions from $T$ and $C_i$ are precovering graphs, which are defined next.
A \emph{precovering graph} is a quadruple $C = (G, m, m^\init, m^\fin)$, where $G=(V,E,h)$ is a finite, strongly connected, directed graph with $V\subseteq\Nom^P$ and labeling $h\colon E \to T$,
and three vectors: a \emph{distinguished} vector $m \in V$, an \emph{initial} vector $m^\init \in \Nom^P$, and a \emph{final} vector $m^\fin \in \Nom^P$.
A precovering graph has to meet two conditions:
First, for every edge $e = (m_1, m_2) \in E$, there is an $m_3\in\Nom^P$ with $m_1 \trans{h(e)} m_3 \leq_\omega m_2$.
Second, we have
$m^\init, m^\fin \leq_\omega m$.
Additionally we impose the restriction on MGTS that the initial vector of $C_0$
equals $M_I$ and the final vector of $C_n$ equals $M_F$.

\subparagraph*{Languages of MGTS}
Each precovering graph can be treated as a finite automaton. For $m_1,m_2\in V$,
we denote by $L(C,m_1,m_2)$ the set of all  $w\in T^*$ read on a path from
$m_1$ to $m_2$.  Moreover, let $L(C) = L(C, m, m)$.  MGTS have associated
languages as well.  Let $\Ncal = C_0, t_1, C_1 \ldots C_{n-1}, t_n, C_n$ be an
MGTS of a Petri net $N$, where $C_i = (G_i, m_i, m_i^\init, m_i^\fin)$. Its
language $L(\Ncal)$ is the set of all words of the form $w = w_0 t_1 w_1 \cdots
w_{n-1} t_n w_n \in T^*$ where:
$w_i \in L(C_i)$ for each $i\in[0,n]$ 
and 
(ii)~there exist markings $\mu_0, \mu'_0, \mu_1, \mu'_1, \ldots, \mu_n, \mu'_n \in \N^P$ such
  that $\mu_i \leq_\omega m^\init_i$ and $\mu'_i \leq_\omega m^\fin_i$ and
  $\mu_0 \trans{w_0} \mu'_0 \trans{t_1} \mu_1 \trans{w_1} \ldots \trans{w_{n-1}} \mu'_{n-1} \trans{t_n} \mu_n \trans{w_n} \mu'_n$.
Notice that by (ii) and the restriction that $m^\init_0 = M_I$ and $m^\fin_n = M_F$,
we have $L(\Ncal)\subseteq L(N,M_I,M_F)$ for any MGTS $\Ncal$.

Hence roughly speaking, $L(\Ncal)$ is the set of runs that contain the transitions
$t_1,\ldots,t_n$ and additionally markings before and after firing these
transitions are prescribed on some places: this is exactly what the restrictions
$\mu_i \leq_\omega m^\init_i$, $\mu'_i \leq_\omega m^\fin_i$ impose.

Notice that at the moment we do not expect that values $\omega$ occurring at $m_i,
m_i^\init, m_i^\fin$ impose any restriction on the form of accepted runs.  Meaning
of $\omega$ values is reflected in the notion of \emph{perfect} MGTS described
later.
As an immediate consequence of the definition, we observe that for every
MGTS $\Ncal = C_0, t_1, C_1 \ldots C_{n-1}, t_n, C_n$
we have 
\begin{equation} L(\Ncal) \subseteq L(C_0) \cdot \{t_1\} \cdot L(C_1) \cdots L(C_{n-1}) \cdot \{t_n\} \cdot L(C_n). \label{eq:overapproximation}\end{equation}

\subparagraph{Perfect MGTS} As announced above, Lambert calls MGTS
with a paricular property
\emph{perfect}~\cite{DBLP:journals/tcs/Lambert92}. Since the precise
definition is involved and we do not need all the details, it is
enough for us to mention a selection of properties of perfect MGTS.
Intuitively, in perfect MGTSes, the value $\omega$ on place $p$ in
$m_i$ means that inside of the component $C_i$, the token count in
place $p$ can be made arbitrarily high.
In~\cite{DBLP:journals/tcs/Lambert92} it is shown (Theorem 4.2
(page~94) together with the preceding definition) that

\begin{thm}[\cite{DBLP:journals/tcs/Lambert92}]\label{thm:language-union}
For a Petri net $N$ one can compute finitely many perfect MGTS $\Ncal_1, \ldots, \Ncal_m$
such that $L(N,M_I,M_F) = \bigcup_{i=1}^m L(\Ncal_i)$.
\end{thm}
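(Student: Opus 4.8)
The plan is to realize $L(N,M_I,M_F)$ as the language of a single, maximally unconstrained MGTS and then to refine it step by step until only perfect MGTS remain, all the while preserving the union of the associated languages. For the starting point, take the MGTS $\Ncal_0$ consisting of one precovering graph $C_0=(G_0,m,m^\init,m^\fin)$ whose graph $G_0$ has the top vertex $\top=(\omega,\ldots,\omega)$ carrying a self-loop labelled $t$ for every $t\in T$, with $m=\top$, $m^\init=M_I$, and $m^\fin=M_F$. The firing condition is met trivially, since $\top\trans{h(e)}\top\le_\omega\top$, and $m^\init,m^\fin\le_\omega\top$ holds because $\top$ has $\omega$ everywhere. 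Unwinding the definition of $L(\Ncal_0)$, the condition $\mu_0\le_\omega M_I$ forces $\mu_0=M_I$ (as $M_I$ carries no $\omega$) and likewise $\mu'_0=M_F$, while $L(C_0)=T^*$; hence $L(\Ncal_0)=\{w\in T^*\mid M_I\trans{w}M_F\}=L(N,M_I,M_F)$.

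The core is a \emph{decomposition step}. To each MGTS $\Ncal$ I would attach a \emph{characteristic}: a linear Diophantine system combining the Eulerian flow conditions on each component graph $G_i$ with the effect equations $\mu_i=\mu'_{i-1}+\eff(t_i)$ linking consecutive distinguished markings. Perfectness asks, roughly, that this system admits solutions in which every coordinate flagged $\omega$ can be driven arbitrarily high inside its component (tokens pumped up) and symmetrically pumped back down, and that each component graph is strongly connected under the refined firing. The key lemma to prove is: \emph{every MGTS is either perfect, or one can effectively compute finitely many MGTS $\Ncal_1,\ldots,\Ncal_r$ with $L(\Ncal)=\bigcup_{j=1}^r L(\Ncal_j)$ and $\theta(\Ncal_j)<\theta(\Ncal)$}, where $\theta$ is a rank in a fixed well-ordered set. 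The refinement is driven by how perfectness fails: a coordinate flagged $\omega$ whose value the characteristic actually bounds is replaced by its finitely many attainable concrete values (finiteness from Dickson's lemma); a coordinate that the characteristic shows to be unbounded is made genuinely pumpable by a covering-tree refinement of the offending component; and a component that fails to be strongly connected is cut into its strongly connected parts, the connecting transitions becoming new distinguished transitions $t_i$. Each operation branches finitely, and the whole point is to design $\theta$ so that every one of them strictly decreases it.

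Granting the decomposition lemma, the theorem follows by well-founded induction. Applying the lemma repeatedly to $\Ncal_0$ produces a finitely branching tree of MGTS in which $\theta$ strictly decreases along every branch; by well-foundedness every branch is finite, and by K\"onig's lemma the whole tree is finite, so the procedure terminates with a finite set of leaves, all perfect. Because each step preserves the language as a union, an easy induction on the tree yields $L(\Ncal_0)=\bigcup_i L(\Ncal_i)$ over the perfect leaves $\Ncal_i$, and by the first paragraph this union equals $L(N,M_I,M_F)$. Effectiveness is clear, since every ingredient---building $\Ncal_0$, testing perfectness, and computing the children---is a decidable manipulation of finite data.

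The main obstacle is the decomposition lemma itself, and within it two points deserve the most care. First, the linear-algebraic characterisation of pumpability: one must show that solvability of the homogeneous part of the characteristic on a given coordinate is \emph{equivalent} to the existence of genuine non-negative runs that raise that coordinate unboundedly and of counterpart runs that lower it---exactly the ``perfect circumstances'' that make the characteristic both necessary and sufficient for a real run. Second, the termination bookkeeping: since some refinements remove an $\omega$ while others introduce one and still others enlarge the sequence, a naive count will not decrease monotonically. The measure $\theta$ must therefore track something finer---intuitively the amount of ``unresolved flexibility'' of the characteristic---and one must verify, operation by operation, that this quantity strictly drops, which is where the lexicographic ordering and the WQO finiteness arguments have to be arranged with precision.
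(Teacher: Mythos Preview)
Your sketch is a faithful high-level outline of Lambert's decomposition procedure, which is precisely what the paper invokes; note, however, that the paper does not reprove this theorem at all but simply cites it as Theorem~4.2 of~\cite{DBLP:journals/tcs/Lambert92}, with only the informal summary in the ``KLMST decomposition'' paragraph by way of explanation. Your proposal therefore goes beyond what the paper offers, and your identification of the two genuinely hard points---the equivalence between linear-algebraic solvability and actual pumpability, and the design of a well-founded rank that drops under every refinement---is accurate and matches where the real work lies in Lambert's original argument.
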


Moreover, by Corollary~4.1 in~\cite{DBLP:journals/tcs/Lambert92} (page~93), given a perfect
MGTS $\Ncal$, it is decidable whether $L(\Ncal)\ne\emptyset$. Therefore, our task
reduces to the following. We have a perfect MGTS $\Ncal$ with $L(\Ncal)\ne\emptyset$
and want to compute regular languages $R_1,\ldots,R_k$ such that
$L(\Ncal)\subseteq R_1\cdots R_k$ and $R_1\times\cdots\times R_k\subseteq \factors[k]{L(\Ncal)}$.
(Note that if the MGTS have different lengths, we can always fill up with $\{\eps\}$).
We choose $R_1,\ldots,R_k$ to be the sequence $L(C_0), \{t_1\},
L(C_1),\ldots,L(C_{n-1}),\{t_n\},L(C_n)$. Then \cref{eq:overapproximation} tells us
that this achieves $L(\Ncal)\subseteq R_1\cdots R_k$ and all that remains to be
shown is
\begin{equation}
L(C_0)\times\{t_1\}\times L(C_1) \times \cdots \times L(C_{n-1}) \times\{t_n\} \times L(C_n)\subseteq \factors[2n+1]{L(\Ncal)}.\label{eq:inclusion-mgts}
\end{equation}

\subparagraph*{Constructing runs}
In order to show \cref{eq:inclusion-mgts}, we employ a simplified version of
Lambert's iteration lemma, which involves covering sequences.
Let $C$ be a precovering graph for a Petri net $N = (P, T, \Pre, \Post)$
with a distinguished vector $m \in \Nom^P$ and initial vector $m^{\init} \in \Nom^P$.
A sequence $u \in L(C) \cap L(N, m^{\init})$ is called a \emph{covering sequence for $C$}
if for every place $p \in P$ we have either 1) $m^{\init}[p] = \omega$, or 2) $m[p] = m^{\init}[p]$ and $\eff(u)[p] = 0$,
or 3) $m[p] = \omega$ and $\eff(u)[p] > 0$.
This corresponds intuitively to the three possible cases for the set of runs in $N$ crossing the component $C$ in a place $p$:
(i)~runs that can have arbitrarily high value on $p$ when entering $C$,
(ii)~runs where, when entering $C$, $p$ has a fixed value, and the tokens in $p$ cannot be pumped inside of $C$,
or (iii)~runs where, when entering $C$, $p$ has a fixed value, but it can be pumped up inside of $C$.

Let $\Ncal = C_0, t_1, C_1 \ldots C_{n-1}, t_n, C_n$ be an MGTS, where $C_i = (V_i, E_i, h_i)$ is a precovering graph, and
let the distinguished vertex be $m_i$ and initial vertex be $m_i^\init$. If $\Ncal$ is a perfect MGTS
then according to the definition from~\cite{DBLP:journals/tcs/Lambert92} (page~92),
for every $i \in [0,n]$ there exists a covering sequence $u_i \in L(C_i) \cap L(N, m_i^\init)$. 
This corresponds to the mentioned intuition that $\omega$ values imply arbitrarily high values.
As a direct consequence of Lemma 4.1 in~\cite{DBLP:journals/tcs/Lambert92} (page~92), Lambert's iteration lemma, we obtain:
\begin{lem}\label{lem:iteration}
Let $\Ncal = C_0, t_1, C_1 \ldots C_{n-1}, t_n, C_n$ be a perfect MGTS and let
$x_i$ be a covering sequences for $C_i$ for $i\in[0,n]$. Then there exist
words $y_i \in T^*$ for $i \in [0,n]$ such that
$x_0 y_0 \cdot t_1 \cdot x_1 y_1 \cdots x_{n-1} y_{n-1} \cdot t_n \cdot x_n y_n \in L(\Ncal)$.
\end{lem}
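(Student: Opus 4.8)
The plan is to derive \cref{lem:iteration} directly from Lambert's iteration lemma (Lemma~4.1 in~\cite{DBLP:journals/tcs/Lambert92}), which is exactly the tool built to realize the ``perfect circumstances'' described above: it turns a solution over $\Z$ of the characteristic equation system of a perfect MGTS into a genuine nonnegative run, by inflating the $\omega$-marked coordinates along forward covering sequences when entering each component and deflating them along backward covering sequences when leaving it. First I would recall the full definition of a perfect MGTS from~\cite[p.~92]{DBLP:journals/tcs/Lambert92} and isolate what is actually needed: for every $i\in[0,n]$ the component $C_i$ comes with a covering sequence $u_i\in L(C_i)\cap L(N,m_i^\init)$ in the sense defined above, together with an analogous backward covering sequence for the reversed component, and (in the situation of interest, where $L(\Ncal)\neq\emptyset$) the associated characteristic system is solvable. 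These are precisely the hypotheses of Lambert's Lemma~4.1.

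The second step is to read the conclusion off in the form stated here. Lambert's lemma yields a run in $L(\Ncal)$ in which each component block is begun by an iterate of the forward covering sequence; rerunning the construction with the prescribed covering sequences $x_0,\dots,x_n$ in the role of the $u_i$ produces words $y_0,\dots,y_n\in T^*$ with $x_0 y_0\,t_1\,x_1 y_1\cdots x_{n-1}y_{n-1}\,t_n\,x_n y_n\in L(\Ncal)$, where $y_i$ gathers the remaining forward iterations, the interior part coming from the $\Z$-solution, and the deflating backward part, all read on a loop at $m_i$ so that indeed $x_i y_i\in L(C_i)$. The membership conditions defining $L(\Ncal)$ --- namely $x_i y_i\in L(C_i)$ and the existence of intermediate markings $\mu_i\leq_\omega m_i^\init$, $\mu_i'\leq_\omega m_i^\fin$ linked by the prescribed firing chain --- are then guaranteed by the way Lambert's construction chooses the iteration multiplicities large enough.

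The point I expect to be most delicate, and the only genuinely new thing to check, is that \cref{lem:iteration} is stated for \emph{arbitrary} covering sequences $x_i$ (these will later be reused as the factors in~\eqref{eq:inclusion-mgts}), whereas Lambert phrases his lemma for the canonical covering sequences $u_i$ attached to the perfect MGTS. I would argue the substitution is harmless: the construction only uses that a covering sequence lies in $L(C_i)\cap L(N,m_i^\init)$ and that on every place it is of type~1, 2 or~3 --- in particular that it has strictly positive effect on each place carrying $\omega$ in $m_i$ --- and every $x_i$ meeting the covering-sequence definition has exactly these properties, so the inflation step, and hence the whole argument, goes through verbatim with $x_i$ in place of $u_i$. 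The remaining work --- chasing the multiplicities and the bookkeeping of the $\mu_i,\mu_i'$ --- is routine once the dictionary between our definitions of MGTS, precovering graph and covering sequence and Lambert's is in place; so the real obstacle is fidelity to~\cite{DBLP:journals/tcs/Lambert92} rather than any new combinatorial idea.
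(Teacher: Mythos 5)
Your proposal matches the paper's own argument: the paper likewise derives \cref{lem:iteration} directly from Lambert's Lemma~4.1, taking the given $x_i$ in the role of Lambert's $u_i$, an arbitrary reversed covering sequence as $v_i$, and setting $y_i = u_i^{k-1}\beta_i(w_i)^k(v_i)^k$ for $k\ge k_0$ --- exactly the ``remaining forward iterations, interior part, deflating backward part'' you describe. The substitution of arbitrary covering sequences for the canonical ones, which you rightly flag as the only delicate point, is handled in the paper in the same way, so no further comparison is needed.
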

%
  \cref{lem:iteration} is obtained from Lemma 4.1
  in~\cite{DBLP:journals/tcs/Lambert92} as follows. The word $u_i$ there is our
  $x_i$ and $v_i$ there is an arbitrary covering sequence of $C_i$ reversed. Then,
  our $y_i$ is set to $u_i^{k-1}\beta_i (w_i)^k (v_i)^k$ for some $k\ge k_0$.
%
The only technical part of the proof of~\cref{thm:approximation} is the following lemma.

\begin{lem}\label{lem:covering-sequences}
Let $C$ be a precovering graph for a Petri net $N = (P, T, \Pre, \Post)$
with a distinguished vector $m \in \Nom^P$ and initial vector $m^\init \in \Nom^P$
such that $s \in L(C) \cap L(N, m^\init)$ is a covering sequence.
Then for every $v \in L(C)$ there is a covering sequence for $C$ of the form $u v$,
for some $u \in T^*$.
\end{lem}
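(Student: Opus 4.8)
\textbf{Proof plan for \cref{lem:covering-sequences}.}

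The plan is to construct $u$ by a loop that, informally, ``absorbs'' the reading of $v$ into a longer sequence that is still covering. Start from the given covering sequence $s\in L(C)\cap L(N,m^\init)$. The first observation is that $L(C)=L(C,m,m)$ consists of closed walks through the distinguished vertex $m$, so both $s$ and $v$ are closed walks at $m$, and hence any concatenation of such walks is again in $L(C)$; in particular, for any $k\in\N$, the word $s^k v$ lies in $L(C)$. The idea is that prepending sufficiently many copies of $s$ will guarantee both fireability from $m^\init$ and the covering-sequence conditions on the effect.

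Next I would check the three place-by-place conditions for $uv$ to be a covering sequence, where $u=s^k$ for a suitably large $k$. For places $p$ with $m^\init[p]=\omega$, condition (1) holds vacuously, and such places never obstruct fireability. For the remaining places, $m[p]=m^\init[p]\in\N$ is a fixed finite value. Here the key point is that $s$ is already a covering sequence, so on every such place $p$ we are in case (2) or (3): $\eff(s)[p]=0$ (so $m[p]=m^\init[p]$ stays put), or $\eff(s)[p]>0$ and $m[p]=\omega$ — but $m[p]$ is finite by assumption on this place, so actually only case (2) is possible, i.e. $\eff(s)[p]=0$ for all finite-valued places $p$. Consequently $\eff(s^k)[p]=0$ on all such $p$, and $\eff(s^k v)[p]=\eff(v)[p]$. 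Now, $v\in L(C,m,m)$ is a closed walk through $m$; using the edge condition of precovering graphs (each edge $e=(m_1,m_2)$ satisfies $m_1\trans{h(e)}m_3\le_\omega m_2$) together with $m^\fin,m^\init\le_\omega m$, a telescoping argument along the walk $v$ shows that $\eff(v)[p]=0$ on every finite-valued place $p$ of $m$ as well (the walk returns to the same finite coordinate, and intermediate values cannot ``escape to $\omega$'' on a finite coordinate without violating $\le_\omega$). Hence $\eff(s^k v)[p]=0$ for all finite places, so conditions (2)/(3) are met with the same case distinction as for $s$ itself.

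It remains to secure $s^k v\in L(N,m^\init)$, i.e. fireability of $s^k v$ from $m^\init$ without going negative. This is where the covering sequence $s$ earns its name: since $s$ is fireable from $m^\init$ and, on the $\omega$-places, can be iterated to push those coordinates arbitrarily high (by the $\eff(s)[p]>0$ clause in case (3), together with the fact that $s\in L(N,m^\init)$ guarantees $s$ itself stays non-negative and returns to a marking $\ge_\omega m^\init$ on $\omega$-places), after firing $s^k$ from a marking below $m^\init$ we reach a marking that agrees with $m^\init$ on the finite places and is at least as large as we like on the $\omega$-places. For $k$ large enough, the latter marking dominates every $\Pre$ value encountered while reading the fixed finite word $v$ (on the $\omega$-places; on finite places it agrees with $m^\init$, and fireability of $v$ there needs a small extra argument), so $v$ is fireable afterwards and $s^k v\in L(N,m^\init)$.

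\textbf{Main obstacle.} The delicate point is the fireability of $v$ on the \emph{finite} places after prepending $s^k$: adding copies of $s$ does not help there, since $\eff(s)[p]=0$ on finite places. One must argue that $v$, being a walk in the precovering graph $C$, is already ``locally fireable'' on the finite coordinates — which should follow from the edge condition of precovering graphs (each step respects $\trans{}{\le_\omega}$, and on a coordinate that is finite at $m$ the $\le_\omega$ relation forces equality, pinning the value down so that the $\Pre$-requirement is automatically satisfied from $m^\init$). Making this ``only $\omega$-places can cause trouble, and those we fix by pumping $s$'' reasoning precise, possibly by invoking the covering-graph invariants of \cite{DBLP:journals/tcs/Lambert92} directly, is the technical crux; everything else is the bookkeeping of the three cases above.
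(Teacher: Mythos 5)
Your overall strategy---prepend $s^k$ for a sufficiently large $k$ and check the covering conditions place by place---is exactly the paper's proof, and your treatment of the coordinates that are finite in $m$ (fireability and zero effect of $v$ there, via the edge condition of precovering graphs pinning those values down) is correct and in fact more explicit than the paper's one-line assertion. However, there is a genuine error in your case analysis. You claim that every place $p$ with $m^\init[p]<\omega$ satisfies $m[p]=m^\init[p]\in\N$, and on that basis you rule out case (3) of the covering-sequence definition (``only case (2) is possible''). This misreads $m^\init\leq_\omega m$: by the paper's definition, $u\leq_\omega v$ means $u[i]=v[i]$ \emph{or} $v[i]=\omega$, so a place may well have $m^\init[p]$ finite while $m[p]=\omega$. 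These are exactly the places where pumping happens, and the only places where condition (3), namely $\eff(uv)[p]>0$, has to be verified for the new sequence $uv=s^kv$---and your proof never verifies it.

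The omission matters because this is the one condition that dictates how $k$ must be chosen. In the paper, $k$ is taken to be (at least) the largest negative effect any prefix of $v$ has on any coordinate; then on each place with $m^\init[p]<\omega$ and $m[p]=\omega$ one has $\eff(s)[p]>0$ by case (3) for $s$, hence $\eff(s^k)[p]\geq k$, which offsets $\eff(v)[p]\geq -k$ and gives $\eff(s^kv)[p]\geq 0$ (take $k$ one larger to get strict positivity). Your choice of $k$ is tied only to fireability of $v$ (``dominates every $\Pre$ value encountered''), which by itself guarantees only that the run stays nonnegative, not that the net effect is positive. To repair the proof, restore the three-way split of places into $m^\init[p]=\omega$; $m[p]=m^\init[p]<\omega$; and $m^\init[p]<\omega$ with $m[p]=\omega$, and in the last case choose $k$ exceeding the maximal drop of $v$ so that $\eff(s^kv)[p]>0$. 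The rest of your argument can stand as written.
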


\begin{proof}
  Intuitively, we do the following. The existence of a covering
  sequence means that one can obtain arbitrarily high values on places $p$ where
  $m[p] = \omega$.  Thus, in order to construct a covering sequence containing $v$
  as a suffix, we first go very high on the $\omega$ places, so high that
  adding $v$ as a suffix later will still result in a sequence with positive effect.  

  Let us make this precise. Executing the sequence $v$ might have a negative
  effect in a place $p\in P$ with $m[p]=\omega$. Let $k\in\N$ be the largest
  possible negative effect a prefix of $v$ can have in any coordinate. Note that
  since $s$ is a covering sequence, $s^k$ is a covering sequence as well. We claim
  that $s^kv$ is also a covering sequence. It is contained in $L(C)$ and fireable
  at $m^{\init}$.  Moreover, by choice of $k$, the sequence $s^kv$ has a positive
  effect on each $p$ with $m[p]=\omega$. If $m[p]<\omega$, then
  $\eff(s)[p]=0=\eff(v)[p]$ and hence $\eff(s^kv)[p]=0$.
\end{proof}

Using \cref{lem:iteration} and \cref{lem:covering-sequences}, it is now easy to
show \cref{eq:inclusion-mgts}. Given words $v_i\in T^*$ with $v_i\in L(C_i)$
for $i\in[0,n]$, we use \cref{lem:covering-sequences} to choose $x_i\in T^*$ such
that $x_iv_i$ is a covering sequence of $C_i$ for $i\in[0,n]$. By \cref{lem:iteration},
we can find $w_1,\ldots,w_n$ so that
\[ x_0v_0w_0\cdot t_1\cdot x_1v_1w_1 \cdots x_{n-1}v_{n-1}w_{n-1} \cdot t_n\cdot x_nv_nw_n \in L(\Ncal), \]
and thus $(v_0,t_1,v_1,\ldots,v_{n-1},t_n,v_n)\in\factors[2n+1]{L(\Ncal)}$, which proves \cref{eq:inclusion-mgts}.

\subparagraph*{Acknowledgements}  We
are indebted to Mohamed Faouzi Atig for suggesting to study
separability by bounded languages, which was the starting point for
this work. Furthermore, we would like to thank S{\l}awomir Lasota and
Sylvain Schmitz for important discussions.  Finally, we are happy to
acknowledge that this collaboration started at the Gregynog~71717
research workshop organized by Ranko Lazi\'{c} and Patrick Totzke.

\bibliographystyle{plain}
\bibliography{bib}

\appendix


\section{Separability by bounded regular languages}\label{sec:separability}
This section contains the omitted proofs concerning separability by bounded regular languages.

\begin{proof}[Proof of \cref{separability-commutative}]
First, if $L_0$ and $L_1$ are separable by a regular $R\subseteq
w_1^*\cdots w_n^*$, then the set 
\[ S=\{(x_1,\ldots,x_n) \mid w_1^{x_1}\cdots w_n^{x_n}\in R\} \]
is recognizable. This is a classical result by Ginsburg and
Spanier~\cite{GinsburgSpanier1966a}. Moreover, $S$ clearly separates
$U_0$ from $U_1$.

Conversely, if $S\subseteq \N^n$ is recognizable and separates $U_0$
from $U_1$, then the set 
\[ R=\{w_1^{x_1}\cdots w_n^{x_n} \mid (x_1,\ldots,x_n)\in S \} \]
is
regular. Let us show that it separates $L_0$ and $L_1$.  If $w\in
L_0$, then we can write $w=w_1^{x_1}\cdots w_n^{x_n}$, which implies
$(x_1,\ldots,x_n)\in U_0$.  Therefore, we have $(x_1,\ldots,x_n)\in S$
and thus $w=w_1^{x_1}\cdots w_n^{x_n}\in R$. Thus, $L_0\subseteq
R$. Now suppose $w\in R$. Then we can write $w=w_1^{x_1}\cdots
w_n^{x_n}$ with $(x_1,\ldots,x_n)\in S$. That implies
$(x_1,\ldots,x_n)\notin U_1$ and hence $w_1^{x_1}\cdots
w_n^{x_n}\notin L_1$. Hence, $R\cap L_1=\emptyset$.
\end{proof}

In the proof, we also use the following fact:

\begin{prop}
If $L\subseteq w_1^*\cdots w_n^*$ is a VAS language, then
the set \[ U=\{(x_1,\ldots,x_n)\in\N^n \mid w_1^{x_1}\cdots w_n^{x_n}\in L\} \]
is a effectively a section of a VAS reachability set.
\end{prop}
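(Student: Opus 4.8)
The plan is to realize $U=\{(x_1,\ldots,x_n)\in\N^n \mid w_1^{x_1}\cdots w_n^{x_n}\in L\}$ as a section of a reachability set by building, from a Petri net $N=(P,T,\Pre,\Post)$ with $L=L_h(N,M_I,M_F)$, an augmented Petri net $N'$ that runs $N$ in ``phases'', one phase per block $w_i^{x_i}$, and records each exponent $x_i$ in a dedicated counting place. Concretely, I would add a new place $c_i$ for each $i\in[1,n]$ together with a small finite ``phase controller'' (extra places encoding a state in $\{1,\ldots,n\}$ and, within phase $i$, a position in the word $w_i$) so that the only transitions of $N'$ are: (a) transitions that simulate a single letter $w_i[j]$ of $N$ (firing the $N$-transitions whose label is that letter, with $\varepsilon$-labeled $N$-transitions interleaved freely), advancing the position counter modulo $|w_i|$ and, whenever the position wraps around, adding one token to $c_i$; and (b) a ``phase change'' transition moving the controller from phase $i$ to phase $i+1$, allowed only when the position within $w_i$ is back at the start. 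Since $h$-images of $\varepsilon$-labeled transitions of $N$ are absorbed silently, a run of $N'$ from the new initial marking (which sets the controller to phase $1$, position $0$, all $c_i=0$, and the $P$-places to $M_I$) to the new final marking (controller past phase $n$, position $0$, all $c_i$ arbitrary, $P$-places equal to $M_F$) corresponds exactly to a factorization $w=w_1^{x_1}\cdots w_n^{x_n}$ accepted by $N$, with the value of $c_i$ at the end equal to $x_i$.

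Next I would record the relationship to a section. The reachability set $S'=\{M\in\N^{P'} \mid M_I'\trans{}M\}$ of $N'$ is a subset of $\N^{P'}$ where $P'=P\cup\{c_1,\ldots,c_n\}\cup P_{\mathrm{ctrl}}$. Let $I\subseteq[1,P']$ pick out all coordinates except the $c_i$'s, and let $x^\star\in\N^{|I|}$ be the vector fixing the controller coordinates to their ``final'' configuration and the $P$-coordinates to $M_F$. Then
\[ \pi_{[1,P']\setminus I}\bigl(S'\cap\pi_I^{-1}(x^\star)\bigr) \;=\; \{(x_1,\ldots,x_n)\in\N^n \mid w_1^{x_1}\cdots w_n^{x_n}\in L\} \;=\; U, \]
which is precisely a section of the VAS reachability set of $N'$ in the sense defined before the Proposition. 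Finiteness of the controller is what guarantees $N'$ is still an ordinary Petri net of bounded size, so the whole construction is effective.

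The routine parts are the bookkeeping in defining $N'$: handling $\varepsilon$-labeled $N$-transitions (they may be fired at any point in any phase, including between blocks, without touching the position counter), and making sure the position-within-$w_i$ subcontroller correctly increments the $c_i$ place exactly once per full traversal of $w_i$. The one point deserving care is the case $x_i=0$ for some $i$: the phase controller must be allowed to skip from phase $i$ to phase $i+1$ with position $0$ and $c_i$ untouched, which is automatic since the ``phase change'' transition only requires position $0$. I expect the main (though still routine) obstacle to be arguing the correctness equivalence in both directions — that every run of $N'$ decomposes as interleaved letter-simulations and phase changes yielding a genuine accepting $N$-run on $w_1^{x_1}\cdots w_n^{x_n}$, and conversely — which is a standard simulation argument of the kind used to show VAS languages form a full trio, so I would state it and sketch it rather than belabor it.
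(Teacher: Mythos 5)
Your proposal is correct and follows essentially the same route as the paper: the paper likewise builds an augmented VAS that simulates the original one while using finite extra information (the phase and the position within the current $w_i$) to count the exponents $x_i$ on $n$ fresh coordinates, and then takes the section that fixes all non-counting coordinates. Your version merely spells out the phase controller and the $x_i=0$ corner case that the paper leaves implicit.
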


\begin{proof}
First recall the notion of a section.
For a subset $I\subseteq [1,n]$, let $\pi_I\colon \N^n\to\N^{|I|}$ be the projection onto the coordinates in $I$.
Then, every set of the form $\pi_{[1,n]\setminus I}(S\cap \pi_{I}^{-1}(x))$ for some
$I\subseteq[1,n]$ and $x\in\N^{|I|}$ is called a \emph{section} of $S\subseteq\N^n$.
Intuitively, we fix a vector $x \in \N^{|I|}$ on coordinates from $I$ and take into the section all the
vectors $y \in \N^{n-|I|}$, which together with $x$ form an $n$-dimensional vector from $S$.

Assume that $L$ is a language of $d$-dimensional VAS $V$.
In order to show that $U$ is a section of a VAS reachability set we construct another VAS $V'$
in the following way. VAS $V'$ simulates $V$ on $d$ coordinates and has $n$ additional coordinates,
on which it counts number of occurrences of words $w_1, \ldots, w_n$. It is easy to see that VAS indeed can count
such occurrences by keeping some additional finite information, like the suffix of current run, which has not been yet
counted into any $w_i$ and the information which $w_i$ has recently appeared.
Section of $V'$ leaving only these $n$ counting coordinates is exactly the set $U$.
\end{proof}


\section{Downward closures}

\begin{proof}[Proof of \cref{lemma:sup}]
  Since $a_1^*\times\cdots \times
  a_n^*\subseteq\dcl{\factors[n]{L_1\cdots L_k}}$, we know that for
  every $\ell\in\N$, we can find words $w_i\in L_i$ for $i\in[1,k]$ so
  that $a_1^{\ell\cdot k}\cdots a_n^{\ell\cdot k}\subword w_1\cdots
  w_k$. Then, in particular, there is a monotone map
  $\pi_\ell\colon[1,n]\to[1,k]$ so that $a_i^\ell\subword w_{\pi(i)}$.
  Since there are only finitely many maps $[1,n]\to[1,k]$, there is
  one monotone map $\pi\colon[1,n]\to[1,k]$ that occurs infinitely
  often in the sequence $\pi_1,\pi_2,\ldots$. We can decompose
  $[1,n]=\pi^{-1}(1)\cup\cdots\cup\pi^{-1}(k)$ and since $\pi$ is
  monotone, each $\pi^{-1}(i)$ is convex. This give rise to a
  decomposition $n=n_1+\cdots+n_k$ so that $\pi^{-1}(1)\subseteq[1,n_1]$,
  $\pi^{-1}(2)\subseteq[n_1+1,n_2]$, etc.  Now, by choice of $\pi$,
  for each $\ell\in\N$, we can find  $w_i\in L_i$ so
  that $a_i^\ell\subword w_{\pi(i)}$, which means
  $(a_1^\ell,\ldots,a_n^\ell)\in
  \dcl{(\factors[n_1]{L_1}\times\cdots\times\factors[n_k]{L_k})}$.
  This implies
  $a_1^*\times\cdots\times
  a_n^*\subseteq\dcl{(\factors[n_1]{L_1}\times\cdots\times\factors[n_k]{L_k})}$.
\end{proof}



\section{Non-overlapping factors}
\begin{proof}[Proof of \cref{factors-unboundedness-regular}]
  Suppose $K\subseteq\Sigma^*$ and let $\cA$ be a finite automaton for
  $R\subseteq\Sigma^*$. Pick a symbol $c\notin\Sigma$. We obtain a
  finite automaton $\cB$ from $\cA$ as follows. In the first step, for
  each pair $p,q$ of states, we check whether there is a word in $K$
  that labels a path $p$ to $q$ in $\cA$: This is decidable because we
  can effectively intersect languages in $\cC$ with regular languages
  and emptiness is decidable for $\cC$. If such a word exists, we add
  an edge labeled $c$ from $p$ to $q$. In the second step, for each
  edge with a label $\ne c$, we replace the label by $\eps$. This
  completes the construction of $\cB$.

  Clearly, $f_K$ is unbounded on $R$ if and only if $\{c\}^*\subseteq
  L(\cB)$.  Moreover, if $f_K$ is bounded on $R$, then $L(\cB)$ is
  finite and we can compute the maximal length $\ell$ of a word in
  $L(\cB)$.  This $\ell$ is then an upper bound for $f_K$ on $L$.
\end{proof}

\section{Counting automata}

We begin with a formal definition of the step relation in counting automata.
For a label
$x\in\Sigma\cup\{\varepsilon\}$, and configurations $(q,u,\mu),
(q',u',\mu')$, we write $(q,u,\mu)\autstep{x}(q',u',\mu')$ 
 if there is
an edge $(q,x,o,q)\in E$ such that one of the following holds:
\begin{itemize}
\item  We have $o=\oppush{a}$ for some $a\in\Gamma$ and  $u'=ua$ and $\mu'=\mu$.
\item 
We have 
$o=\opcheck{K}{c}$ for some $K\subseteq\Gamma^*$ from $\cC$ and
  $c\in C$ and $u'=\varepsilon$ and either (a)~$u\in K$ and
$\mu'=\mu+1_c$
 or (b)~$u\notin K$ and $\mu'=\mu$. Here,  $1_c\in\N^C$ is the vector with
$1_c(c)=1$ and $1_c(c')=0$ for $c'\in C\setminus c$. 
\end{itemize}
Moreover, for $w\in\Sigma^*$, we write $(q,u,\mu)\autstep{w} (q',u',\mu')$ 
if
\[ (q,u,\mu)=(q_1,u_1,\mu_1)\autstep{x_0}\cdots \autstep{x_n}(q_n,u_n,\mu_n)=(q',u',\mu'), \]
for some configurations $(q_i,u_i,\mu_i)$ and $w = x_0 \cdots x_n$, where $x_0, \ldots, x_n \in\Sigma_\eps$.

In our proof of \cref{counting-automata:decidable}, we will use
\cref{thm:approximation-unboundedness} and hence decidability of a multidimensional
predicate. Suppose $t=(K_1,\ldots,K_n)$ is a tuple of languages
$K_i\subseteq\Sigma^+$.  We define a function $f_t\colon\Sigma^*\to\N$
as follows.  Intuitively, $f_t(w)$ is the largest number $k$ so that
we can pick a set of non-overlapping factors of $w$ among whom there
are at least $k$ members of $K_i$ for each $i\in[1,n]$.  

Formally, for a word $w\in\Sigma^*$, let $f_t(w)$ be the largest
number $\ell$ such that there is a tuple $(w_1,\ldots,w_m)\in
(\Sigma^+)^m$ with $(w_1,\ldots,w_m)\in \factors[m]{w}$ such that for
each $i\in[1,n]$, we have $|\{j\in[1,m] \mid w_j\in K_i\}|\ge \ell$.
Using an $n$-dimensional predicate and \cref{thm:approximation-unboundedness}, we
can show the following.
\begin{lem}\label{non-overlapping-factors:multi:decidable}
  Let $\cC$ be a full trio with decidable emptiness. Given a tuple
  $t=(K_1,\ldots,K_n)$ of languages from $\cC$ and a VAS language
  $L$, it is decidable whether $f_t$ is unbounded on $L$. Moreover, if
  $f_t$ is bounded on $L$, one can compute an upper bound $B\in\N$ for
  $f_t$ on $L$.
\end{lem}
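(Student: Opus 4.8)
The plan is to apply \cref{thm:approximation-unboundedness} with an $n$-dimensional unboundedness predicate tailored to $f_t$, and then handle the regular case directly. Concretely, define the predicate $\pP[cnt]$ on subsets $S\subseteq(\Sigma^*)^n$ by declaring $\pP[cnt](S)$ to hold if and only if for every $\ell\in\N$ there exist $w_1^{(1)},\ldots,w_\ell^{(1)}\in K_1$, \ldots, $w_1^{(n)},\ldots,w_\ell^{(n)}\in K_n$ whose concatenation-positions can be interleaved, i.e.\ there is a single tuple in $S$ listing (some superset of) all these $\ell\cdot n$ words in order. To make this precise we should phrase $\pP[cnt](S)$ as: for every $\ell$ there is $m$ and $(u_1,\ldots,u_m)\in S$ such that $|\{j : u_j\in K_i\}|\ge\ell$ for each $i\in[1,n]$. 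With this definition one checks immediately that $\pP[cnt](\factors[?]{L})$ encodes unboundedness of $f_t$ on $L$ — except that $f_t$ counts factors in a \emph{single} word while $\factors[n]{L}$ records tuples realizable across one word of $L$; since $\factors[m]{w}$ for a fixed $w$ exactly captures non-overlapping factor tuples, the match is exact once we note $\pP[cnt]$ is $n$-dimensional in the sense of the axioms (the ``number of simultaneously unbounded quantities'' is $n$, one per $K_i$, even though individual witnessing tuples have varying length $m$). Strictly, to fit the framework we evaluate $\pP[cnt]$ on $\factors[n]{L}$ and use that any length-$m$ non-overlapping factor tuple of a word $w\in L$ projects down, for each choice of $n$ of its coordinates, into $\factors[n]{L}$; conversely long tuples are recovered by \cref{axiom:upward}. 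This bookkeeping is the first thing I would nail down carefully.

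Next I verify the three axioms for $\pP[cnt]$. \Cref{axiom:upward} is immediate: enlarging $S$ only adds more candidate witnessing tuples. \Cref{axiom:union}: if $\pP[cnt](S\cup T)$ then for every $\ell$ we get a tuple in $S\cup T$ with $\ell$ members of each $K_i$; by pigeonhole one of $S,T$ supplies such a tuple for infinitely many $\ell$, and since the condition for a fixed $\ell$ is monotone in $\ell$ (a witness for $\ell'\ge\ell$ restricts to one for $\ell$ by dropping factors — wait, it does not restrict cleanly, so instead: a witness with $\ell'\ge\ell$ copies of each $K_i$ is \emph{a fortiori} a witness for $\ell$), we conclude $\pP[cnt](S)$ or $\pP[cnt](T)$. \Cref{axiom:concatenation} is the combinatorial heart and proceeds as in \cref{non-overlapping-factors:decidable}: by contraposition, if for each $i$ and each participant $L_j$ of a $k$-fold concatenation the relevant projections of $\factors{L_j}$ fail $\pP[cnt]$-type bounds, then $f_t$ is bounded on each $L_j$ by some $B_j$, and on $L_1\cdots L_k$ a non-overlapping factor tuple splits into at most $k$ blocks (one straddling factor per cut is lost), giving the bound $\sum_j B_j + (k-1)$; this shows the concatenation cannot satisfy $\pP[cnt]$ unless one block does, which is exactly the dimension-splitting demanded by \cref{axiom:concatenation}. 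I expect this last step to be the main obstacle: getting the $n$ coordinates of the predicate to be distributed correctly among the $k$ factors, and confirming that the axiom's conclusion (a decomposition $n=n_1+\cdots+n_k$ with $\pP$ holding on a product) is genuinely implied, rather than just a weaker ``some participant is unbounded'' statement.

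Finally, for the regular case I would mimic the proof of \cref{factors-unboundedness-regular}. Given a finite automaton $\cA$ for $R$, for each $i\in[1,n]$ and each ordered pair of states $(p,q)$ decide — using regular intersection and decidable emptiness in $\cC$ — whether some word of $K_i$ labels a $p$-to-$q$ path, and if so add an edge labeled by a fresh symbol $c_i$. After replacing every original label with $\eps$ we obtain an automaton $\cB$ over $\{c_1,\ldots,c_n\}$, and $f_t$ is unbounded on $R$ iff $\cB$ accepts words with arbitrarily many occurrences of each $c_i$, i.e.\ iff $L(\cB)$ is infinite in a way that is unbounded in every letter; this is a decidable property of a finite automaton (check for a cycle that, across the strongly connected component structure reachable from start to an accepting state, collectively produces all $n$ letters), and when it fails $L(\cB)$ may still be infinite but each letter-count is bounded, so one can extract an explicit bound $B$ by inspecting the (finitely many) simple cycles. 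Composing with \cref{thm:approximation-unboundedness}: compute $R$, decide unboundedness of $f_t$ on $R$, and report $B$ as an upper bound for $f_t$ on $L$ since $L\subseteq R$.
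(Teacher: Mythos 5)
Your overall plan---an $n$-dimensional unboundedness predicate fed into \cref{thm:approximation-unboundedness}, plus an automaton construction for the regular case---is the paper's plan, and your regular-case argument (fresh letters $c_i$ for $K_i$-labelled state pairs, then a decidable boundedness check on the resulting automaton, done in the paper via the semilinear Parikh image of $L(\cB)$) is essentially correct. But both points you flag as ``to be nailed down'' are genuine gaps, and your proposed patches fail. First, $\pP[cnt]$ as you define it quantifies over tuples $(u_1,\ldots,u_m)$ of arbitrary length $m$, so it is not a predicate on subsets of $(\Sigma^*)^n$ and does not fit the framework; the fix ``project down to $n$ coordinates and recover long tuples by \cref{axiom:upward}'' cannot work, since \cref{axiom:upward} is only monotonicity of the predicate under $S\subseteq T$ and there is no way to reconstruct $\factors[m]{L}$ from $\factors[n]{L}$. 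The paper's device is different: it keeps tuples of length exactly $n$ but lets each \emph{component word} carry its own factor statistics, defining $\Delta(w)\subseteq\N^n$ as the set of count-vectors realizable by non-overlapping factor tuples inside $w$, and letting $\pP(S)$ demand that for every $\ell$ some $(w_1,\ldots,w_n)\in S$ admits a vector in the Minkowski sum $\Delta(w_1)+\cdots+\Delta(w_n)$ with all coordinates at least $\ell$. Since every word of $L$ is a factor of itself, $\pP(\factors[n]{L})$ is then genuinely equivalent to unboundedness of $f_t$ on $L$.

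Second, your verification of \cref{axiom:concatenation} by contraposition is wrong, not merely incomplete. The step ``$f_t$ bounded by $B_j$ on each $L_j$ implies $f_t$ bounded by $\sum_j B_j+(k-1)$ on $L_1\cdots L_k$'' is valid for a single language $K$ (as in \cref{non-overlapping-factors:decidable}) but false for the simultaneous count: take $t=(\{a\},\{b\})$, $L_1=a^*$, $L_2=b^*$; then $f_t\equiv 0$ on $L_1$ and on $L_2$, yet $f_t$ is unbounded on $L_1L_2$. (This is consistent with \cref{axiom:concatenation} because the decomposition $n=1+1$ works, but it refutes your bound.) What the axiom actually requires is producing the decomposition $n=n_1+\cdots+n_k$, and the paper does this directly rather than by contraposition: for each $\ell$ it records a profile $\pi_\ell\colon[1,n]\to[1,k]$ indicating, for each $i$, which block of the concatenation supplies $\ell$ non-overlapping $K_i$-factors (losing at most $k$ factors that straddle cuts), applies pigeonhole to fix one profile $\pi$ recurring for infinitely many $\ell$, and sets $n_j=|\pi^{-1}(j)|$. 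Without the $\Delta$/Minkowski-sum formulation of the predicate and this profile argument, the proof does not go through.
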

\begin{proof}
  Let $t=(K_1,\ldots,K_n)$ be a tuple of languages with
  $K_i\subseteq\Sigma^+$ for $i\in[1,n]$.  For a word $w\in\Sigma^*$,
  let $\Delta(w)\subseteq\N^n$ be the set of all
  $(x_1,\ldots,x_n)\in\N^n$ such that there is a tuple $(w_1,\ldots,
  w_m)\in (\Sigma^+)^m$ with $(w_1,\ldots,w_m)\in \factors[m]{w}$ and
  $x_i=|\{j\in[1,m]\mid w_j\in K_i\}|$.

  Let us now define the predicate $\pP$.  For
  $S\subseteq(\Sigma^*)^n$, let $\pP(S)$ express that for every
  $\ell\in\N$, there is a tuple $(w_1,\ldots,w_n)\in S$ and a vector
  $(x_1,\ldots,x_n)\in \sum_{i=1}^n \Delta(w_i)$ such that
  $x_i\ge\ell$ for each $i\in[1,n]$. Here, the sum on subsets of
  $\N^n$ is to be read as the Minkowski sum: $X+Y=\{x+y \mid x\in
  X,~y\in Y\}$. Note that then indeed $\pP(\factors[n]{L})$ if and
  only if $f_t$ is unbounded on $L$.

  The predicate $\pP$ clearly satisfies
  \cref{axiom:upward,axiom:union}, so let us prove
  \cref{axiom:concatenation} and suppose $\pP(\factors[n]{L_1\cdots
    L_k})$. A \emph{profile} is a map $\pi\colon
  [1,n]\to[1,k]$. Intuitively, a profile records for each $i\in[1,n]$
  which of the factors $L_1,\ldots,L_k$ can be chosen to find a
  particular number of factors from $K_i$.

  Let $\ell\in\N$. Since $\pP(\factors[n]{L_1\cdots L_k})$, we know
  that there is a $(w_1,\ldots,w_n)\in\factors[n]{L_1\cdots L_k}$ such
  that there is a $(x_1,\ldots,x_n)\in\sum_{i=1}^n \Delta(w_i)$ with
  $x_i\ge k\cdot\ell+k$. Since $(w_1,\ldots,w_n)\in\factors[n]{L_1\cdots L_k}$, there is a word $u\in
  L_1\cdots L_k$ with $(w_1,\ldots,w_n)\in\factors[n]{u}$.
  Thus, we have a $(y_1,\ldots,y_n)\in\Delta(u)$ with $y_i\ge k\cdot \ell+k$
  for each $i\in[1,n]$. Since $u\in L_1\cdots L_k$, we can write
  $u=u_1\cdots u_k$ with $u_i\in L_i$ for $i\in[1,k]$.

  Observe that then there is a $(z_1,\ldots,z_n)\in\sum_{i=1}^k
  \Delta(u_i)$ with $z_i\ge y_i-k$ for $i\in[1,n]$: From the set of
  factors that witnesses $(y_1,\ldots,y_n)\in\Delta(u)$, we can select
  those that are confined to a single $u_i$; then we lose at most
  those that fall on the border of two $u_i$'s, hence at most
  $k$. Since $y_i\ge k\cdot\ell+k$, we have $z_i\ge k\cdot \ell$ for
  $i\in[1,n]$.  Write $(z_1,\ldots,z_n)=\sum_{i=1}^k
  (z_{i,1},\ldots,z_{i,n})$ with
  $(z_{i,1},\ldots,z_{i,n})\in\Delta(u_i)$. Since
  $z_{1,i}+\cdots+z_{k,i}=z_i\ge k\cdot\ell$, we can find for each
  $i\in[1,n]$, an index $j\in[1,k]$ so that $z_{j,i}\ge \ell$.  This
  defines a profile $\pi_\ell$: Let $\pi_\ell(i)=j$. 

  To summarize, we have defined for each $\ell\in\N$ a profile
  $\pi_\ell$ so that the following holds. For each $\ell\in\N$, there
  are words $u_1,\ldots,u_k$ with $u_j\in L_j$ for $j\in[1,k]$ so that
  for each $i\in[1,n]$, the set $\Delta(u_{\pi_\ell(i)})$ contains a
  vector $(z_1,\ldots,z_n)$ with $z_i\ge\ell$.

  Since there are only finitely many profiles, the sequence
  $\pi_1,\pi_2,\ldots$ must contain one profile $\pi$ infinitely
  often. This profile has thus the following property. For each $\ell\in\N$,
  there are words $u_1,\ldots,u_k$ with $u_j\in L_j$ for $j\in[1,k]$ so that
  for each $i\in[1,n]$, the set $\Delta(u_{\pi(i)})$ contains a
  vector $(z_1,\ldots,z_n)$ with $z_i\ge\ell$.
  
  This allows us to define the decomposition $n=n_1+\cdots+n_k$: For
  each $j\in[1,k]$, let $n_j=|\{i\in[1,n]\mid \pi(i)=j\}|$. We claim
  that then $\pP(\factors[n_1]{L_1}\times\cdots\factors[n_k]{L_k})$
  holds.  Let $j\in\N$. We can choose words $u_1,\ldots,u_k$ with
  $u_j\in L_j$ for $j\in[1,k]$ so that for each $i\in[1,n]$, the set
  $\Delta(u_{\pi(i)})$ contains a vector $(z_1,\ldots,z_n)$ with
  $z_i\ge\ell$. 

  Let us construct the tuple $(v_1,\ldots,v_n)$ successively from left
  to right.  For each $j=1,\ldots,k$, we do the following. If $n_j=0$,
  then we add no new component.  If $n_j>0$, then we include $u_j$ and
  then $(n_j-1)$ entries containing just the empty word
  $\varepsilon$. This clearly yields a tuple with $n=n_1+\cdots+n_k$
  entries. Moreover, we have
  $(v_1,\ldots,v_n)\in\factors[n_1]{L_1}\times\cdots\times\factors[n_k]{L_k}$.
  Finally, for each $i\in[1,n]$, we have $n_{\pi(i)}>0$ and hence
  $u_{\pi(i)}$ occurs in the tuple $(v_1,\ldots,v_n)$. Therefore, some
  $\Delta(v_i)$ contains a vector $(z_1,\ldots,z_n)$ with
  $z_i\ge\ell$.  Therefore, the sum $\sum_{i=1}^n\Delta(v_i)$ contains
  a tuple $(z_1,\ldots,z_n)$ with $z_i\ge \ell$ for every $i\in[1,n]$.
  This proves our claim and hence that $\pP$ satisfies
  \cref{axiom:concatenation}.
  This shows that $\pP$ is in fact an unboundedness predicate.

  According to \cref{thm:approximation-unboundedness}, we can compute
  a regular language $R\supseteq L$ such that $\pP(\factors[n]{L})$ if
  and only if $\pP(\factors[n]{R})$. This means $f_t$ is unbounded on
  $L$ if and only if it is unbounded on $R$. Moreover, since
  $L\subseteq R$, an upper bound of $f_t$ on $R$ is also an upper
  bound of $f_t$ on $L$. Thus, it remains to show that we can decide
  whether $f_t$ is bounded on $R$ and, if so, we can compute an upper
  bound of $f_t$ on $R$.

  Take a finite automaton $\cA$ for $R$.  From $\cA$, we obtain a
  finite automaton $\cB$ over the alphabet $\Gamma=\{a_1,\ldots,a_n\}$
  as follows. First, we remove all edges.  Then, for each pair $p,q$
  of states and each $i\in[1,n]$, we check whether there is a word
  $K_i$ that is read on a path from $p$ to $q$ in $\cA$: This can be
  checked because $K_i$ belongs to $\cC$, $\cC$ is effectively closed
  under intersecion with regular languages, and emptiness is decidable
  for $\cC$. If that is the case, then we draw a new edge labeled
  $a_i$ from $p$ to $q$. Then, clearly, $f_t$ is unbounded on $R$ if
  and only if for every $\ell\in\N$, there is a word $w$ accepted by
  $\cB$ that contains $a_i$ at least $\ell$ times, for each
  $i\in[1,n]$.  Consider the set
  \[ S=\{\ell\in\N \mid \exists w\in L(\cB)\colon \forall i\in[1,n]\colon |w|_{a_i}\ge \ell \}. \]
  It is easy to see that $S$ is effectively semilinear: the Parikh
  image of $L(\cB)$ is semilinear and hence $S$ is definable in Presburger
  arithmetic. Furthermore, $f_t$ is unbounded on $R$ if and only if
  $S$ is infinite, which is easy to check. Finally, if $f_t$ is
  bounded on $R$, then $S$ is finite and we can compute the maximal
  element of $S$, which is an upper bound of $f_t$ on $R$.
\end{proof}

In the proof of \cref{counting-automata:decidable}, we will use the
concept of a transducer. A \emph{(finite-state) transducer} is a tuple
$\cA=(Q,\Sigma,\Gamma,E,q_0,Q_f)$, where $Q$ is a finite set of
\emph{states}, $\Sigma$ is its \emph{input alphabet}, $\Gamma$ is its
\emph{output alphabet}, $E\subseteq Q\times
\Sigma_\eps\times\Gamma_\eps\times Q$ is its set of \emph{edges},
$q_0\in Q$ is its \emph{initial state}, and $Q_f\subseteq Q$ is its
set of \emph{final states}. A \emph{configuration} of $\cA$ is a
triple $(q,u,v)\in Q\times\Sigma^*\times\Gamma^*$ and we write
$(q,u,v)\to(q',u',v')$ if there is an edge $(q,x,y,q')$ with $u'=ux$
and $v'=vy$. Let $\to^*$ denote the reflexive transitive closure of
$\to$. 

Subsets of $\Sigma^*\times\Gamma^*$ for alphabets $\Sigma,\Gamma$ are
called \emph{transductions}.
A transducer induces a transduction as follows:
\[ T(\cA)=\{(u,v)\in\Sigma^*\times\Gamma^* \mid (q_0,\eps,\eps)\to^* (q,u,v)~\text{for some $q\in Q_f$} \}. \]
Then, $T(\cA)$ is called the transduction \emph{induced by $\cA$}.  
A transduction of the form $T(\cA)$ is called a \emph{rational transduction}.
In
general, for a transduction $T\subseteq\Sigma^*\times\Gamma^*$ and a
language $L\subseteq\Sigma^*$, we define
\[ T(L)=\{v\in\Gamma^* \mid \exists u\in L\colon (u,v)\in T \}. \]
It is well known that a language class $\cC$ is a full trio if and
only if it is effectively closed under rational transductions, meaning
given a description of $L$, we can effectively compute a description
of $T(L)$ in $\cC$.

We are now ready to prove \cref{counting-automata:decidable}.
\begin{proof}[Proof of \cref{counting-automata:decidable}]
  Given $\cA$, we can transform $\cA$ into a transducer $\cB$ as
  follows.  Let $K_1,\ldots,K_n$ be the languages occurring in edges
  $\opcheck{K}{c}$ in $\cA$ and pick letters $d,e_{i,c}\notin\Gamma$
  for each $i\in[1,n]$ and $c\in C$. The transducer $\cB$ operates like $\cA$, but
  instead of performing operations $\oppush{a}$ or $\opcheck{K_i}{c}$,
  it outputs symbols from the alphabet $\Lambda=\Gamma\cup
  \{d,e_{i,c}\mid i\in[1,n], c\in C\}$: When $\cA$ performs
  $\oppush{a}$, $\cB$ outputs $a$.  When $\cA$ performs
  $\opcheck{K_i}{c}$, then $\cB$ outputs $e_{i,c}d$. Moreover, in the
  beginning of a run, $\cB$ outputs a single $d$ before it starts
  operating like $\cA$. Now let $T$ be the transduction induced by
  $\cB$ and let $L'=T(L)$. 
  Then $L'$ is again a VAS language and
  consists of precisely those words
  $du_1e_{i_1,c_1}du_2e_{i_2,c_2}\cdots du_me_{i_m,c_m}u$ such
  that $u\in\Gamma^*$ and $\cA$ has a run on a member of $L$ that
  performs for each $j\in[1,m]$ the operation
  $\opcheck{K_{i_j}}{c_j}$ while $u_j$ is on the work tape.

  Consider the language class $\bar{\cC}$, which consists of all
  finite unions of languages in $\cC$. Then $\bar{\cC}$ is again a
  full trio and has a decidable emptiness problem.  For each $c\in C$,
  let $\bar{K}_c=\bigcup_{i\in[1,n]} dK_ie_{i,c}$.  Then clearly
  $\bar{K}_c$ belongs to $\bar{\cC}$. Let $C=\{c_1,\ldots,c_k\}$ and
  consider the language tuple
  $t=(\bar{K}_{c_1},\ldots,\bar{K}_{c_k})$. Then $f_t$ is unbounded on
  $L'$ if and only if $\cA$ is unbounded on $L$. Moreover, an upper
  bound $B\in\N$ for $f_t$ on $L'$ is also an upper bound for $\cA$ on
  $L$.  Thus, an application of
  \cref{non-overlapping-factors:multi:decidable} completes the proof.
\end{proof}


\section{Factor inclusion}

\subparagraph*{Detailed proof of \cref{axiom:union}}
First, let us verify \cref{axiom:union} in detail.  Suppose that
$L_1\cup L_2$ is $K$-factor universal and that $L_1$ is not $K$-factor
universal. The latter means there is some $u\in K^*$ with
$u\notin\factors{L_1}$. Now let $v\in K^*$ be arbitrary. Since $uv\in
K^*$ and by $K$-factor universality of $L_1\cup L_2$, we know that
$uv\in\factors{L_1\cup L_2}=\factors{L_1}\cup\factors{L_2}$. Since
$uv\in\factors{L_1}$ is impossible, this only leaves
$uv\in\factors{L_2}$ and in particular $v\in\factors{L_2}$. This
proves that $L_2$ is $K$-factor universal and hence
\cref{axiom:union}.

It remains to show decidability of whether  $K^*\subseteq\factors{R}$.

\begin{lem}\label{factor-universality-regular}
  Let $\cC$ be a full trio with decidable emptiness. Given a language
  $K$ from $\cC$ and a regular language $R$, it is decidable whether
  $K^*\subseteq\factors{R}$.
\end{lem}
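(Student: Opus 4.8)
The plan is to reduce the inclusion $K^*\subseteq\factors{R}$ to a finiteness or emptiness check on an auxiliary regular language built from $R$ and an oracle for $K$. First I would observe that $\factors{R}$ is effectively regular: if $\cA$ is a finite automaton for $R$ with state set $Q$, then a word $w$ is a factor of $R$ iff there are states $p,q\in Q$ with $p$ reachable from the initial state, $q$ co-reachable to a final state, and $w$ labelling a path from $p$ to $q$. Hence $\factors{R}=\bigcup_{p,q} R_{p,q}$, where $R_{p,q}$ is the (regular) set of words labelling paths from $p$ to $q$, the union ranging over the ``useful'' pairs $(p,q)$.

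Next I would turn the question ``$K^*\subseteq\factors{R}$'' around. Since $\factors{R}$ is regular, its complement $\Sigma^*\setminus\factors{R}$ is regular as well, so $K^*\not\subseteq\factors{R}$ is equivalent to $K^*\cap(\Sigma^*\setminus\factors{R})\ne\emptyset$. Here $K^*$ is the Kleene star of a language from the full trio $\cC$; full trios are in general not closed under star, so I cannot directly intersect $K^*$ with a regular language inside $\cC$. The fix is to use a rational transduction: take a transducer that reads a word over a marked alphabet encoding a factorization into $K$-blocks and checks, against the automaton for $\Sigma^*\setminus\factors{R}$, that the concatenation is not a factor of $R$; but each individual block must be checked for membership in $K$. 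Concretely, I would build, for each useful pair $(p,q)$ and each pair $(p',q')$, the information ``is there a word in $K$ labelling a path from $p'$ to $q'$ in $\cA$'', which is decidable because $\cC$ is effectively closed under intersection with regular languages and has decidable emptiness (this is exactly the argument already used in the proof of \cref{factors-unboundedness-regular}). From this one-step oracle, replace in $\cA$ every ``$K$-edge'' $p'\to q'$ by a single letter and work purely at the level of this derived finite automaton $\cB$ over a one-letter alphabet (or a suitably abstracted finite alphabet).

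The cleanest route is then: $K^*\subseteq\factors{R}$ holds iff for every $\ell$, every word $k_1k_2\cdots k_\ell$ with $k_j\in K$ is a factor of $R$. Using the derived automaton $\cB$ as above, where a $c$-edge from $p$ to $q$ records the existence of some $K$-word on a $p\to q$ path, one shows that $K^*\subseteq\factors{R}$ is equivalent to: for every word $u\in\{c\}^*$ accepted by $\cB$ between useful states, \emph{every} word of $K^*$ with that block-structure is a factor --- and this, in turn, is a statement that can be recast as a universality-type condition on $\cB$ together with the original automaton $\cA$. Since $\cB$ is finite-state, this is decidable. Alternatively, and perhaps more simply, I would note that if $K$ is empty the statement is trivial, and if $K\ne\emptyset$ then pick any $w_0\in K$ (or observe that we only care whether $\eps\in K$ and whether $K$ contains a nonempty word); a standard pumping argument on $\cA$ reduces checking $K^*\subseteq\factors{R}$ to checking finitely many inclusions $K^{\le N}\subseteq\factors{R}$ for an explicit $N$ bounded by $|Q|$, because a word long enough to exceed $|Q|$ as a factor cycles in $\cA$ and one can then handle arbitrarily long $K^*$-words by iterating the cycle --- provided $K$ itself is closed under the relevant iteration, which again is decidable via the oracle. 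I expect the main obstacle to be exactly this interaction between the star on $K$ (which the full trio need not support) and the factor operation on $R$: one must arrange all the ``does $K$ meet this regular piece'' queries at the level of the finite automaton for $R$ so that only finitely many oracle calls to $\cC$ are needed, and then argue that the finitely many answers determine the infinitary inclusion $K^*\subseteq\factors{R}$. Once that abstraction to the finite automaton $\cB$ is in place, the remaining decidability is a routine automata-theoretic check.
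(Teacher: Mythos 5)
You have all the right ingredients --- complementing $\factors{R}$, and answering finitely many oracle queries of the form ``does $K$ intersect the regular set of words labelling a $p\to q$ path'', which is decidable since $\cC$ is closed under regular intersection and has decidable emptiness --- but you assemble them on the wrong automaton, and that is where the argument breaks. In your ``concretely'' step and your ``cleanest route'' you saturate the automaton $\cA$ \emph{for $R$} with $c$-edges recording that \emph{some} $K$-word joins $p'$ to $q'$. That derived automaton $\cB$ carries only existential information, so from it you can read off which block structures admit \emph{some} $K^*$-word occurring as a factor of $R$; it does not determine whether \emph{every} $K^*$-word is a factor. You notice this yourself when the condition becomes ``every word of $K^*$ with that block-structure is a factor'' and you call it a ``universality-type condition on $\cB$'': that condition quantifies over all $K$-words per edge and is \emph{not} a property of the finite automaton $\cB$, so the claim ``since $\cB$ is finite-state, this is decidable'' is unjustified. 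The fallback pumping argument is also incorrect: $K^{\le N}\subseteq\factors{R}$ does not imply $K^*\subseteq\factors{R}$ (take $K=\Sigma$ and $R$ a single word containing all words of length $\le N$ as factors), and iterating a cycle of $\cA$ produces \emph{one} long factor, not all long $K^*$-words; the caveat ``provided $K$ is closed under the relevant iteration'' does not repair this.

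The fix --- and the paper's actual proof --- is to perform the $c$-edge saturation on a finite automaton for the \emph{complement} $\Sigma^*\setminus\factors{R}$, which is effectively regular, as you note at the outset. Then $K^*\not\subseteq\factors{R}$ iff $K^*$ meets $\Sigma^*\setminus\factors{R}$, iff the saturated automaton $\cB$ (keeping only the $c$-edges) accepts some word, i.e.\ $L(\cB)\ne\emptyset$. Because one is now searching for the \emph{existence} of a counterexample, the purely existential information recorded by the $c$-edges is exactly what is needed, the some-versus-all mismatch disappears, and the whole procedure reduces to a plain emptiness check after finitely many oracle calls, just as in the proof of \cref{factors-unboundedness-regular}.
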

\begin{proof}
  Suppose $K\subseteq\Sigma^*$ and let $\cA$ be a finite automaton for
  the regular language $\Sigma^*\setminus\factors{R}$. We have to
  decide whether $K^*\cap L(\cA)=\emptyset$.  Pick a symbol
  $c\notin\Sigma$. We obtain a finite automaton $\cB$ from $\cA$ as
  follows. For each pair $p,q$ of states, we check whether there is a
  word in $K$ that labels a path $p$ to $q$ in $\cA$: This is
  decidable because we can effectively intersect languages in $\cC$
  with regular languages and emptiness is decidable for $\cC$. If such
  a word exists, we add an edge labeled $c$ from $p$ to $q$. In the
  second step, we remove all edges except for those labeled $c$. This
  finishes the construction of $\cB$.  Then we have
  $L(\cB)\subseteq\{c\}^*$.  Furthermore, $K^*$ intersects $L(\cA)$ if
  and only if $L(\cB)\ne\emptyset$.
\end{proof}


\end{document}